\title{Selective Disclosure in Overlapping Generations\footnote{We thank Eddie Dekel, Mira Frick, Drew Fudenberg, Ying Gao, Germ\'{a}n Giezewski, Peter Klibanoff, Suraj Malladi, Lara Sanchez, and Bruno Strulovici for helpful comments. Pei thanks the NSF Grant SES-2337566 for financial support.}}
\author{Nemanja Anti\'{c}\footnote{Northwestern University, \url{nemanja.antic@kellogg.northwestern.edu}} \and Harry Pei\footnote{Northwestern University, \url{ harrydp@northwestern.edu}}}
\date{\today}
\documentclass[pstricks,12pt]{article}
\usepackage{latexsym}
\usepackage{graphicx}
\usepackage{epsfig}
\usepackage{dsfont}
\usepackage{color}
\usepackage{natbib}
\usepackage{epstopdf}
\usepackage{hyperref}
\hypersetup{
    colorlinks=true,
    linkcolor=blue,
    citecolor=blue,
    filecolor=blue,
    urlcolor=blue
}
\usepackage{comment}
\usepackage{amsmath, amsthm, amssymb}
\usepackage{pstricks, pst-node}
\usepackage{tikz}
\usepackage{caption}
\usepackage{graphicx}
\graphicspath{ {./figures/} }
\usepackage{tikz}
\usetikzlibrary{decorations.pathreplacing, calc}
\usepackage{pgfplots}
\usepackage{subcaption}
\pgfplotsset{compat=1.18}
\pgfplotsset{
  every axis/.append style={font=\small}
}

\begin{document}
\numberwithin{equation}{section}

\maketitle
\begin{abstract}
    We develop an overlapping generations model where each agent observes a verifiable private signal about the state and, with positive probability, also receives signals disclosed by his predecessor. The agent then takes an action and decides which signals to pass on. Each agent's action has a positive externality on his predecessor and his optimal action increases in his belief about the state. 
We show that as the probability that messages reach the next generation approaches one, agents become increasingly selective in disclosing information. In the limit, all signals except for the most favorable ones will be concealed.\\

    \noindent \textbf{Keywords:} information disclosure, hard information, overlapping generations. 
\end{abstract}

\newtheorem{Proposition}{\hskip\parindent\bf{Proposition}}
\newtheorem{Theorem}{\hskip\parindent\bf{Theorem}}
\newtheorem{Lemma}{\hskip\parindent\bf{Lemma}}
\newtheorem{Corollary}{\hskip\parindent\bf{Corollary}}
\newtheorem*{Definition}{\hskip\parindent\bf{Monotone Belief}}
\newtheorem*{Definition1}{\hskip\parindent\bf{Threshold Strategy}}
\newtheorem*{Definition2}{\hskip\parindent\bf{Constant-Threshold Strategy}}
\newtheorem*{Definition3}{\hskip\parindent\bf{Gradualism Property}}
\newtheorem*{Definition4}{\hskip\parindent\bf{Improvement Principle}}
\newtheorem*{Definition5}{\hskip\parindent\bf{No-Turning-Back Property}}
\newtheorem{Assumption}{\hskip\parindent\bf{Assumption}}
\newtheorem{Condition}{\hskip\parindent\bf{Condition}}
\newtheorem{Claim}{\hskip\parindent\bf{Claim}}

\newpage
\section{Introduction}\label{sec1}
Research shows that the intergenerational transmission of anecdotes and stories can shape individuals’ preferences, beliefs, and behaviors, with important implications for culture and welfare \citep{bisin2011economics, bisin2023advances}. Information may be lost in this transmission process, not only because of physical communication frictions, but also because of deliberate omissions, as different generations may have misaligned interests  \citep{hirshleifer2020presidential, stubbersfield2022content}. For example, parents who wish to encourage effort may emphasize narratives suggesting that hard work  leads to a better life \citep{benabou2006belief, doepke2017parenting}.

This paper analyzes the intergenerational transmission of information using an overlapping generations model that incorporates both exogenous communication frictions and misaligned interests across generations. As in \citet{dye1985disclosure}, we restrict attention to the transmission of \textit{hard information}, so that information can be concealed but cannot be fabricated or falsified. We show that 
as communication frictions vanish,
agents become increasingly selective in disclosing information, and in the limit, all signals except for the most favorable ones will be concealed. 

We study steady state equilibria of a doubly infinite time horizon game. The state of the world is constant over time and is either \textit{high} or \textit{low}. In each period, only one agent is active. He observes a private signal about the state and, with some probability, receives the signals his immediate predecessor passed on to him. The agent then chooses both an effort level and a subset of the signals he possesses (including his private signal and any inherited signals) to pass on to his immediate successor. With some probability, communication succeeds and the successor observes this subset. However, agents cannot directly observe others' actions, whether and when communication failed, or the exact sequence of the disclosed signals.

Each agent's optimal effort strictly increases in his belief about the state and this effort generates a positive externality for his predecessor. Therefore, when choosing the subset of signals to disclose, each agent's objective is to maximize his successor's belief about the state.

Due to the plethora of equilibria, some of which are driven by unreasonable off-path beliefs,\footnote{For example, when there are three or more signal realizations, there may exist equilibria where signal realizations with the highest likelihood ratio are concealed while those with the second-highest likelihood ratio are revealed. Such equilibria can be sustained by off-path beliefs in which an agent who observes the disclosure of any signal with the highest likelihood ratio infers that many signals with low likelihood ratios are concealed.} we restrict attention to equilibria that satisfy at least one of the following refinements. The first refinement requires that agents’ beliefs be \textit{monotone}, in the sense that their belief about the state increases when their predecessor substitutes a lower‑likelihood‑ratio signal for a higher‑likelihood‑ratio one. The second refinement requires a \textit{threshold equilibrium}, whereby agents disclose a signal if and only if its likelihood ratio exceeds a threshold that may depend on their private and inherited signals. Both refinements are satisfied by all equilibria of \citet{dye1985disclosure}'s model.

Theorem \ref{Theorem1} shows that when communication fails with low probability, disclosure becomes extremely selective. In particular, in all strict equilibria that satisfy at least one of our two refinements, agents conceal all signals except the ones with the highest likelihood ratio.

To better understand how selectively agents disclose as communication frictions vary, Theorem \ref{Theorem2} allows for arbitrary communication success rates. It focuses on \textit{constant-threshold equilibria}, that is, threshold equilibria with a disclosure threshold that is constant across all information sets. The theorem shows that, for every signal with a likelihood ratio greater than one, there exists a constant-threshold equilibrium where the signal is disclosed if and only if the communication success rate is below a cutoff. This cutoff is strictly increasing in the signal’s likelihood ratio. Consequently, as communication frictions vanish, disclosure becomes more selective, in the sense that a smaller set of signals can be revealed in constant-threshold equilibria.

Our findings contrast with \citet{dye1985disclosure}, who shows that a higher probability of the sender possessing evidence leads to less selective disclosure. This difference arises from the different forms of uncertainty agents face about others’ information structures. To build intuition, consider the sequence of realized signals following the last communication failure in our model. This sequence consists of multiple good signals (those that are disclosed), interspersed with (sub)sequences of bad signals (which may be of zero length). That is, there is one bad sequence preceding the first good signal, one between each pair of adjacent good signals, and one following the last good signal. Conditional on the state, the expected length of each bad sequence is independent of the  disclosed good signals. Hence, by disclosing another good signal, the agent also reveals the existence of another bad sequence. 
This negative inference is absent in \citet{dye1985disclosure}, where any disclosure implies that no information is hidden. As communication frictions vanish, the expected length of each bad sequence increases while the likelihood ratio of each good signal remains unchanged. Disclosure therefore becomes more selective, since each disclosed signal must have a higher likelihood ratio in order to overcome the stronger negative effect of one more bad sequence. 

Our work contributes to the literature on disclosure games where the sender has an unknown amount of signals.\footnote{\citet*{di2021strategic} study an agent who observes the realizations of the $k$ $(\leq n)$ highest draws among $n$ conditionally i.i.d. signals, which differs from our model where the number of signals disclosed is endogenous. In \citet{bardhi2023local}, each sender decides whether to disclose his signal without observing its realization.
In \citet{antic2025selected}, the state is $n$-dimensional and the sender selects which of $k$ $(\leq n)$ realized values of those dimensions to disclose (typically not the highest ones). 
\citet{onuchic2025disclosure} consider multiple senders who decide whether to disclose a multi-dimensional state via a collective decision rule.} Most of the existing works on this topic study one-shot games. The models of 
\citet{shin1994news, song2003disclosures} and
\citet{dziuda2011strategic} focus on binary signals, which are not well-suited to study the selectiveness of disclosure. \citet{gao2025inference} assumes that the sender has a continuum of signals and analyzes equilibria satisfying the \textit{truth-leaning refinement} of \citet*{hart2017evidence}, which, as we discuss in Section \ref{sec3}, conflicts with our refinements.\footnote{\citet{gieczewski2024coalition} propose a coalitional proofness refinement and provide conditions under which their refinement coincides with the truth-leaning refinement and the receiver-optimality refinement.} She shows that the sender sometimes discloses bad signals in order to replicate the signal distribution under another state, which contrasts with our findings.\footnote{\citet{gao2025inference} also shows that there exists a receiver-optimal equilibrium where the sender discloses only the most favorable signal, although there can also be other receiver-optimal equilibria.} 
\citet{rappoport2025evidence} focuses on receiver-optimal equilibria and 
shows that the receiver will take worse actions for the sender when the sender is expected to have more information. We instead examine the selectiveness of the sender’s disclosure behavior.

Unlike in one-shot disclosure games, the signals available to the senders in our overlapping generations  game are endogenously determined in equilibrium. This leads to qualitatively different findings compared to one-shot games, as we describe in Section \ref{sec5}.

A growing literature studies disclosure in other dynamic settings.
In \citet*{guttman2014not}, a sender may receive up to two signals over two periods and chooses both what to disclose and when. In \citet{felgenhauer2014strategic}, \citet{felgenhauer2017bayesian}, \cite{lou2023private}, \citet{arieli2025bayesian}, and \citet*{dai2026bayes}, a sender conducts a sequence of experiments before selecting a subset of outcomes to disclose to a receiver. In \citet{gieczewski2022verifiable} and \citet{squintani2025strategic}, 
some agents know the state and can disclose it to their neighbors.
In \citet*{kremer2024disclosing}, a potentially informed sender decides whether to disclose the current value of a random walk.
\citet{pei2025community} studies repeated games where players can voluntarily disclose signals about their past actions. 
Our paper complements this literature by examining how the selectiveness of disclosure depends on communication frictions.

Our work also contributes to the study of overlapping generations models starting from \citet{samuelson1958exact}.
While many papers in this literature \citep{bhaskar1998informational, acemoglu2015history}
focus on the observability of agents' actions, we study settings where actions are not observed, so that agents can influence their successors' behaviors only by disclosing information. 
\citet*{anderlini2012communication} analyze an overlapping generations model 
with strategic communication and payoff externalities. Communication in their model is cheap talk rather than the disclosure of hard evidence. They also focus on a different question -- whether agents learn the state in the long run -- while we study the selectiveness of disclosure. 

Broadly speaking, our model is also related to the literature on social learning such as \citet{banerjee1992simple}, \citet*{bikhchandani1992theory}, and \citet{smith2000pathological}.\footnote{\citet{niehaus2011filtered} studies the sharing of complementary or substitutable skills among a sequence of agents, which contrasts with our model where 
agents share noisy signals about a payoff-relevant state of the world.} Unlike those models, where there are no payoff externalities and agents learn from others’ actions in addition to their private signals, our model incorporates payoff externalities and agents learn from signals explicitly disclosed by others.


\section{Baseline Model}\label{sec2}
Consider a game with a doubly infinite time horizon $t \in \mathbb{Z} \equiv \{...-1,0,1,...\}$. Agent $t$ is the only active player in period $t$. He observes a  \textit{private signal} $s^t \in S$ about the state $\theta \in \{\underline{\theta},\overline{\theta}\}$, 
and with probability $\delta \in (0,1)$, some signals disclosed to him by agent $t-1$. Then he chooses an action $a^t \in \mathbb{R}$ and a set of signals to reveal to agent $t+1$. 
We assume that $\theta$ is constant over time, 
the set of signal realizations $S \equiv \{s_1,s_2,...,s_l\}$ is finite with $l \geq 2$, and that no signal reveals any state. Let $\pi_0 \in (0,1)$ denote the prior probability that $\theta=\overline{\theta}$.
Let $f_{\theta,i}$ denote the probability of $s_i$ conditional on state $\theta$.
To simplify notation, we write 
$\overline{f}_i$ instead of
$f_{\overline{\theta},i}$  and 
 $\underline{f}_i$ instead of
$f_{\underline{\theta},i}$. 
Let $\mathbf{f} \equiv \{\overline{f_i},\underline{f_i}\}_{i=1}^l$.
Let $L_i \equiv \overline{f_i}/\underline{f_i}$ denote the \textit{likelihood ratio} of signal $s_i$. We assume that $L_i \neq L_j$ for every $i \neq j$ and we index signals so that $L_i$ is strictly decreasing in $i$
\begin{equation}\label{eq2.1}
\infty > L_1 > L_2 >... > L_l > 0.\footnote{Analogs of our results can be derived when multiple signal realizations share the same likelihood ratio.} 
\end{equation}

Agent $t$'s payoff is $u(\theta,a^t)+ v(\theta,a^{t+1})$, which depends on the state $\theta$, his action $a^t$, and his immediate successor's action $a^{t+1}$. We assume that (i) $v$ is strictly increasing in $a^{t+1}$, (ii) $u$ is strictly concave in $a^t$, which implies that 
there is a unique $a^*(\pi) \in \mathbb{R}$ for every $\pi \in [0,1]$ that solves 
$\max_{a \in \mathbb{R}} \big\{ \pi u(\overline{\theta},a) + (1-\pi) u(\underline{\theta},a) \big\}$, 
and that (iii) $a^*(\pi)$ is strictly increasing in $\pi$.

The information agent $t-1$ reveals to agent $t$ is characterized by $\mathbf{h} \equiv (h_1,h_2,...h_l) \in H \equiv \mathbb{N}^l$ where $h_i$ stands for the number of signal $s_i$ revealed. With probability $\delta \in (0,1)$, communication succeeds, in which case agent $t$ observes the number of each signal revealed by agent $t-1$ but not the exact order of these signals, i.e., agent $t$'s \textit{history} is $\mathbf{h}$.
With probability $1-\delta$, communication fails in which case
agent $t$'s history is $\mathbf{0} \equiv (0,...,0)$.  We interpret $1-\delta$ as a communication friction. We refer to 
 $h_1+...+h_l$ as the \textit{length} of history $(h_1,...,h_l)$.
Each agent only observes his history and private signal, but cannot directly observe his predecessors' actions, the calendar times at which communication fails, the exact sequence of disclosed signals, and so on.


We assume that signals are verifiable so that they can be concealed but not falsified
\citep{grossman1981informational, milgrom1981good, dye1985disclosure}. 
Formally, let $\mathbf{1}_i \in \mathbb{N}^l$ denote a vector where the $i$th entry is $1$ and all other entries are $0$. After observing history $\mathbf{h}$ and receiving private signal $s_i$, 
agent $t$ can only disclose vectors that are no more than $\mathbf{h} + \mathbf{1}_i$. That is, he can only reveal a subset of the signals he possesses, which include his private signal and the signals he inherited (if any).

A \textit{pure strategy} $\sigma: H \times S \rightarrow \mathbb{R} \times H$ maps an agent's history and private signal to his action and the information he discloses to his immediate successor, subject to the feasibility constraint.
Our solution concept is \textit{steady state Bayesian equilibrium} (or \textit{equilibrium} for short), which extends the notion of steady state equilibrium in \citet*{clark2021record} and \citet{pei2025community} to incomplete information games. 
A (pure-strategy) equilibrium of our game consists of (i) a pure strategy $\sigma$, (ii) two distributions $\underline{\mu},\overline{\mu} \in \Delta (H)$ over histories, and (iii) a belief map $\pi: H \rightarrow [0,1]$ which represents the probability that an agent assigns to state $\overline{\theta}$ after observing his history but before observing his own private signal. These three components satisfy (i) when all agents use strategy $\sigma$, there exists a steady state in which agents' histories are distributed according to $\underline{\mu}$ conditional on $\theta=\underline{\theta}$ and are distributed according to $\overline{\mu}$ conditional on $\theta=\overline{\theta}$, 
(ii) $\pi(\mathbf{h})$ is derived via Bayes rule from $(\underline{\mu},\overline{\mu})$
 at every $\mathbf{h}$ that occurs with positive probability,
and that (iii) $\sigma$ maximizes an agent's expected payoff when all agents' beliefs after observing their histories are given by the mapping $\pi$.\footnote{As in \citet*{clark2021record}, our solution concept requires all agents using the same pure strategy, i.e., the same mapping $\sigma: H \times S \rightarrow \mathbb{R} \times H$. Because the distribution over histories is at a steady state in our equilibrium, whether an agent observes calendar time does not affect his belief about the state.}
A \textit{strict equilibrium} is one in which the equilibrium strategy $\sigma$ is strictly optimal for the agents at each history-signal pair $(\mathbf{h},s) \in H \times S$. 

The existence of strict equilibrium will be implied by  Proposition \ref{Prop1}. Lemma \ref{L0} shows that each pure strategy induces a unique steady state history distribution conditional on each state.
\begin{Lemma}\label{L0}
For every pure strategy $\sigma$ and conditional on every $\theta \in \{\overline{\theta},\underline{\theta}\}$, there exists a unique steady state distribution over histories $\mu_{\theta} \in \Delta (H)$.
\end{Lemma}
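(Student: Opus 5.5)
Fix $\theta$ and a pure strategy $\sigma$. Conditional on $\theta$, the history of agent $t+1$ is a function of agent $t$'s history, agent $t$'s private signal, and the communication shock. The private signal is drawn i.i.d. from $f_\theta$, and the shock equals success with probability $\delta$ independently of everything else. So histories evolve as a time-homogeneous Markov chain on the countable state space $H=\mathbb{N}^l$ with transition kernel
\[
P(\mathbf{h},\mathbf{h}') = (1-\delta)\,\mathbf{1}\{\mathbf{h}'=\mathbf{0}\} + \delta \sum_{i=1}^l f_{\theta,i}\,\mathbf{1}\{\mathbf{h}' = d_\sigma(\mathbf{h},s_i)\},
\]
where $d_\sigma(\mathbf{h},s_i)$ is the disclosure component of $\sigma(\mathbf{h},s_i)$. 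A steady state distribution is exactly an invariant probability measure $\mu_\theta$ of $P$, i.e.\ $\mu_\theta = \mu_\theta P$. The plan is to exploit the regeneration at $\mathbf{0}$: from every state the chain jumps to $\mathbf{0}$ with probability at least $1-\delta>0$. This makes $P$ a Doeblin (uniformly ergodic) kernel, and both existence and uniqueness follow.

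For existence and uniqueness I will write the candidate explicitly rather than cite general theory. Let $\nu_k$ denote the distribution of the history $k$ steps after a communication failure, assuming every intervening communication succeeds. Thus $\nu_0=\delta_{\mathbf{0}}$ and $\nu_{k+1}=\nu_k Q$, where $Q(\mathbf{h},\cdot)$ is the law of $d_\sigma(\mathbf{h},s)$ with $s\sim f_\theta$. Note that $P = (1-\delta)\mathbf{1}\delta_{\mathbf{0}} + \delta Q$ with $Q$ a stochastic kernel. Define
\[
\mu_\theta \equiv \sum_{k=0}^\infty (1-\delta)\delta^k \nu_k .
\]
This is a probability measure on $H$, and a one-line computation gives $\mu_\theta P = (1-\delta)\delta_{\mathbf{0}} + \delta \sum_k (1-\delta)\delta^k \nu_{k+1} = \mu_\theta$. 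Interpretively, the number of periods since the last failure is geometric. This is why the doubly infinite horizon makes the construction natural: almost surely a failure has occurred in the past, so the history is well defined.

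For uniqueness, take any invariant $\mu$ and iterate $\mu=\mu P^n$. Expanding $P^n$ by the time of the last failure among the $n$ steps gives
\[
\mu P^n = \sum_{k=0}^{n-1}(1-\delta)\delta^k \nu_k + \delta^n \mu Q^n .
\]
The last term has total mass $\delta^n\to 0$, so letting $n\to\infty$ yields $\mu=\mu_\theta$ in total variation. Equivalently, $\|\mu P-\mu' P\|_{TV}\le \delta\|\mu-\mu'\|_{TV}$, since the failure component is common to both, so $P$ is a contraction and has at most one fixed point.

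The main point needing care is justifying the decomposition $P = (1-\delta)\mathbf{1}\delta_{\mathbf{0}} + \delta Q$. The potential subtlety is that the disclosure choice does not depend on whether the successor's communication will succeed, and the agent chooses his disclosure before the shock. This matches the model, since failure simply replaces the history with $\mathbf{0}$. Once this is established, the expansion of $P^n$ is a routine induction and the rest is bookkeeping. No measurability issues arise because $H$ is countable.
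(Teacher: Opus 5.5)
Your proof is correct and follows essentially the same route as the paper. Both arguments rest on the decomposition $P=(1-\delta)\,\mathbf{1}\{\mathbf{h}'=\mathbf{0}\}+\delta Q$ with $Q$ stochastic, and on the resulting total-variation contraction with modulus $\delta$. The only difference is in how existence is obtained. The paper invokes the Banach fixed point theorem, which tacitly uses completeness of $\Delta(H)$ under the total variation norm. You instead write the fixed point explicitly as the geometric mixture $\sum_k(1-\delta)\delta^k\nu_k$, which is the limit of the iterates started from the point mass at $\mathbf{0}$. This sidesteps the completeness step and directly gives the renewal structure the paper later exploits in the proof of Theorem~\ref{Theorem2}.
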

The proof is in Appendix \ref{secE}. 
\paragraph{Remarks:} One interpretation of our model is paternalistic parenting in the spirit of \citet{doepke2017parenting}. 
Children decide how to allocate time between schoolwork and tempting but unproductive activities (e.g., playing video games). The state  $\theta$ captures the return to studying in terms of future well-being and is either high ($\theta=\overline{\theta}$) or low ($\theta=\underline{\theta}$). Children devote more time to studying when they believe the return is higher. Parents, motivated by paternalistic concern, prefer their children to allocate more time to studying than to playing video games. 
Children cannot observe whether their parents themselves studied hard when they were young. However, 
parents can shape their children’s beliefs by sharing anecdotes—some based on their own observations (i.e., their private signals) and others passed down from their own parents.\footnote{This is referred to as \textit{authoritative parenting} in \citet{doepke2017parenting} that ``\textit{parents can mold their children's preferences so as to align them with their own, for instance, by emphasizing the virtue of hard work...}''} 

Our model can also be applied to an organizational setting. New members (e.g., workers) choose their effort levels subject to a convex cost, while senior members (e.g., managers) tell stories regarding the productivity of effort. Because workers are more inclined to work hard when they perceive their effort as more productive, managers—who do not internalize the workers' effort costs—strategically share stories to induce higher effort from the workers. Some of these stories are personal observations while others are 
inherited from previous generations. 
Workers cannot observe their managers' effort in previous periods. 
Workers age into managerial roles after one period and choose a set of stories to pass on to the next generation of workers. 

Our overlapping generations game is related to, albeit qualitatively different from, a one-shot disclosure game where the number of verifiable signals the sender observes is drawn from a geometric distribution with parameter $\delta$. We comment on the qualitative differences between the equilibria of our game and those in the one-shot disclosure game in Section \ref{sec5}.

In Sections \ref{secDiscConstraintN} and
\ref{sechistorydependent}, we discuss two extensions of our baseline model. This includes when agents face an upper bound on the number of signals disclosed, or when the communication success rate depends on the set of signals disclosed.


\section{Main Results}\label{sec3}
Our game admits many equilibria because each agent may possess arbitrarily many signals, allowing their successor's belief to become arbitrarily pessimistic at off-path histories. This in turn discourages agents from revealing those off-path histories, making those beliefs self-fulfilling. We start with some preliminary observations which describe some of those equilibria:
\begin{Proposition}\label{Prop1}
Fix $(u,v,\mathbf{f},\pi_0)$. For every $\delta \in (0,1)$, 
$i \in \{1,2,...,l\}$ with $L_i >1$, and $n \in \mathbb{N} \cup \{\infty\}$, 
there is a strict equilibrium where (i) agents never disclose any $s \neq s_i$ and (ii) the number of $s_i$ disclosed 
at history $\mathbf{h} = (h_1,...,h_l)$ and private signal $s_j$ is $\min \{n, h_i + \mathds{1}\{i=j\} \}$.
\end{Proposition}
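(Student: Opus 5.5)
The plan is to construct the equilibrium explicitly: compute the steady state it induces, specify off-path beliefs, and then check that every prescribed choice is strictly optimal. Throughout, fix $i$ with $L_i>1$ and $n\in\mathbb{N}\cup\{\infty\}$.

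\textbf{Strategy and beliefs.} Let $\sigma$ prescribe the following at history $\mathbf{h}$ and private signal $s_j$:
\begin{itemize}
\item disclose $\min\{n,h_i+\mathds{1}\{i=j\}\}\cdot\mathbf{1}_i$;
\item take action $a^*(\cdot)$ evaluated at the posterior formed from $\pi(\mathbf{h})$ and $s_j$.
\end{itemize}
On path, every history has the form $k\mathbf{1}_i$ with $0\le k\le n$. For any history containing some $s_m$ with $m\neq i$, or with $h_i>n$, set $\pi(\mathbf{h})=0$.

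\textbf{Steady state.} Under $\sigma$, the number of $s_i$'s in agent $t$'s history is $\min\{n,K\}$. Here $K$ is the number of $s_i$ realizations among the private signals of agents $t-1,t-2,\dots$ back to the last communication failure. The number $N$ of such agents satisfies $\Pr(N=m)=(1-\delta)\delta^m$, independently of $\theta$, and conditional on $\theta$ and $N=m$, $K\sim\mathrm{Bin}(m,f_{\theta,i})$. Summing the series gives, for $f=f_{\theta,i}$,
\[
\Pr_\theta(K=k)=\frac{1-\delta}{1-\delta(1-f)}\,r_\theta^{\,k},\qquad r_\theta\equiv\frac{\delta f}{1-\delta(1-f)},\qquad \Pr_\theta(K\ge k)=r_\theta^{\,k}.
\]
By Lemma \ref{L0}, this is the unique steady state. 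The likelihood ratio $\Pr_{\overline\theta}(K=k)/\Pr_{\underline\theta}(K=k)$ equals a constant times $(\overline r/\underline r)^k$. Moreover
\[
\frac{\overline r}{\underline r}=L_i\,\frac{1-\delta+\delta\underline f_i}{1-\delta+\delta\overline f_i},
\]
and this exceeds $1$ because $\overline f_i(1-\delta+\delta\underline f_i)-\underline f_i(1-\delta+\delta\overline f_i)=(1-\delta)(\overline f_i-\underline f_i)>0$, using $L_i>1$. Hence $K$ has a strictly increasing likelihood ratio, so $\pi(k\mathbf{1}_i)$ is strictly increasing in $k$.

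For finite $n$ the top cell is $\{K\ge n\}$. Its likelihood ratio is at least that of $\{K=n\}$, which strictly exceeds that of $\{K=n-1\}$, so monotonicity survives the truncation. All on-path beliefs lie in $(0,1)$ since no signal is conclusive, so every on-path belief strictly exceeds the off-path belief $0$.

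\textbf{Optimality.} The action choice is strictly optimal by strict concavity of $u$. Disclosure affects the agent's payoff only through $v(\theta,a^{t+1})$, and with probability $1-\delta$ it has no effect at all. With probability $\delta$ the successor's belief after his own signal $s$ is an increasing function of $\pi(\mathbf{h}')$, his action is strictly increasing in that belief, and $v$ is strictly increasing in $a^{t+1}$. Hence, state by state and signal by signal, the agent strictly prefers the feasible disclosure $\mathbf{h}'$ with the highest $\pi(\mathbf{h}')$. Among feasible disclosures, the message $\min\{n,h_i+\mathds{1}\{i=j\}\}\mathbf{1}_i$ is the unique maximizer. Messages with fewer $s_i$'s yield a strictly lower belief by the monotonicity above. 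Messages containing any other signal, or more than $n$ copies of $s_i$, yield belief $0$, which is strictly below every on-path belief. This argument applies also at off-path histories, so $\sigma$ is strictly optimal everywhere.

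The main step to get right is the steady-state computation and the strict monotonicity of the likelihood ratio in $k$, including the truncation at $n$. The key inequality $(1-\delta)(\overline f_i-\underline f_i)>0$ is exactly where $L_i>1$ is used.
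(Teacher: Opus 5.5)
Your proposal is correct and follows essentially the same route as the paper: compute the steady-state probability of each on-path history $m\mathbf{1}_i$, and note that each additional $s_i$ multiplies the likelihood ratio by $L_i(1-\delta+\delta\underline{f}_i)/(1-\delta+\delta\overline{f}_i)>1$. Then bound the likelihood ratio of the truncated top cell $n\mathbf{1}_i$ from below by that of $\{K=n\}$, and sustain the strategy with pessimistic off-path beliefs. Your version is slightly more explicit than the paper's, giving the closed-form tail $\Pr_\theta(K\ge n)=r_\theta^{\,n}$ and checking strict optimality at off-path histories, but the argument is the same.
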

The proof is in Appendix \ref{secC}. Proposition \ref{Prop1} implies that regardless of $\delta$, there is always an equilibrium where agents disclose all realized $s_1$ and nothing else. 
However, it also implies that for any $s_i$ with a likelihood ratio more than $1$, there is also a strict equilibrium in which agents never disclose anything other than $s_i$ and disclose up to $n$ realizations of $s_i$. Such equilibria exist even if there are other signals with strictly higher likelihood ratios than $s_i$, as they can be sustained by the off-path belief that as long as any signal other than $s_i$ (including those with higher likelihood ratios) is disclosed, the agent who disclosed it must have concealed many signals
with low likelihood ratios, which justifies their successors'  pessimistic beliefs about $\theta$. 

Such beliefs seem unreasonable since agents should become more \textit{optimistic} about the state once 
they inherit one more high-likelihood-ratio signal and one fewer low-likelihood-ratio signal.
We are unaware of standard refinements in signaling games that can rule out those unreasonable equilibria. Hence, we propose two refinements, each of which is sufficient to rule them out.\footnote{Analogs of our refinements are also used in \citet*{guttman2014not}. By the end of this section, we will discuss the relationship between our refinements and the \textit{truth-leaning refinement} in \citet*{hart2017evidence} as well as the receiver-optimal equilibria in \citet{rappoport2025evidence} and \citet{gao2025inference}.}

Our first refinement is on the belief map $\pi : H \rightarrow [0,1]$, which stands for the probability agents assign to state $\overline{\theta}$ after observing their history but before observing their private signal.  
\begin{Definition}
A belief map $\pi: H \rightarrow [0,1]$ is \textit{monotone} if $\pi(\mathbf{h}) > \pi(\mathbf{h^*})$
for every pair of histories $\mathbf{h} \equiv (h_1,...,h_l)$ and $\mathbf{h^*} \equiv (h_1^*,...,h_l^*)$ that have the same length and satisfy
\begin{equation}\label{3.1} 
\sum_{j=1}^n h_j \geq \sum_{j=1}^n h_j^* \textrm{ for every } n < l \textrm{ with strict inequality for some}.
\end{equation}
\end{Definition}
An equilibrium is a \textit{monotone equilibrium} if it has a monotone belief map. 
Intuitively, a belief map is monotone if fixing the total number of signals disclosed by agent $t$,  agent $t+1$ will become more optimistic about the state once agent $t$ replaces a signal that has a lower likelihood ratio with one that has a higher likelihood ratio.
Our monotonicity property is satisfied in all equilibria of \citet{dye1985disclosure}'s model where the sender 
either has one signal or has no signal.

Our second refinement is on agents' equilibrium disclosure behaviors. Let 
$\sigma_D (\mathbf{h},s) \in H \equiv \mathbb{N}^l$ denote the set of signals disclosed to the next agent upon observing history $\mathbf{h}$ and private signal $s$, which is a component of agents' strategy $\sigma$. 
For every $\mathbf{h} \in H$ and $j \in \{0,1,...,l\}$, let $\mathbf{h}[j] \in H$ denote an $l$-dimensional vector where the first $j$ entries coincide with $\mathbf{h}$ and the other entries are all $0$.  According to our definition, $\mathbf{h}[0]= \mathbf{0}$ and $\mathbf{h}[l]=\mathbf{h}$ for every $\mathbf{h} \in H$.  
\begin{Definition1}
A strategy $\sigma$ is a \textit{threshold strategy} if there exists a function $J: H \times S \rightarrow \{0,1,...,l\}$ such that 
$\sigma_D (\mathbf{h},s_i)= (\mathbf{h} +\mathbf{1}_i) [ J(\mathbf{h},s_i) ]$ for every $\mathbf{h} \in H$ and $i \in \{1,2,...,l\}$. 
\end{Definition1}
An equilibrium is a \textit{threshold equilibrium} if agents use a threshold strategy.\footnote{In any threshold equilibrium, agents will use a threshold strategy in equilibrium but need to have incentives not to deviate to any other strategy, including strategies that are not threshold strategies.} 
We view our threshold strategy as a natural extension of the equilibrium strategies in \citet{dye1985disclosure} to our dynamic setting. 
Intuitively, a threshold strategy is one where at every history-signal pair, agents reveal all available signals (including his private signal and the signals contained in his history) with likelihood ratios above some cutoff and conceal all signals with likelihood ratio below the cutoff.
It allows the disclosure threshold to vary across histories and private signals. 
For example, suppose $J(\mathbf{0},s)=1$ for every $s \in S$ and $J(\mathbf{h},s)=0$ for every $\mathbf{h} \neq \mathbf{0}$ and $s \in S$, then agents disclose $s_1$ when the length of his history is $0$ and disclose nothing otherwise. 
It rules out, for example, strategies where at some history-signal pair, agents disclose $s_2$ but conceal $s_1$.

Among the equilibria in Proposition \ref{Prop1}, the one where agents reveal all realized $s_1$ but nothing else satisfies both of our refinements. The ones where agents disclose $s_i$ for some $i \geq 2$ but conceal $s_1$ fail both refinements. Some natural questions include (i) whether there are threshold equilibria where agents reveal $\{s_1,s_2,...,s_i\}$ for some $i \geq 2$ and conceal other signals, (ii) whether there are equilibria with monotone belief maps where agents reveal signals other than $s_1$, and (iii) how does the sender's disclosure behavior depends on communication frictions.

Our main result, Theorem \ref{Theorem1}, shows that when communication frictions vanish, signals other than $s_1$ are never disclosed in any strict equilibrium that satisfies at least one of our refinements.
\begin{Theorem}\label{Theorem1}
Fix any $(u,v,\mathbf{f},\pi_0)$. There exists $\delta^* \in (0,1)$ such that for every $\delta > \delta^*$:
\begin{enumerate}
    \item In any strict monotone equilibrium, signals $s_2,...,s_l$ are never disclosed at any history that occurs with positive probability. 
        \item In any strict threshold equilibrium, signals $s_2,...,s_l$ are never disclosed at any history that occurs with positive probability.  
\end{enumerate}
\end{Theorem}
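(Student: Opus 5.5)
Both parts rest on one mechanism: disclosing one more signal reveals one more "bad sequence" of concealed signals. We will show that once $\delta$ is close to one, no signal other than $s_1$ carries a likelihood ratio large enough to offset this. Because the equilibrium is strict and the successor's action $a^*(\pi)$ is strictly increasing, the agent's disclosure problem is to maximize $\pi(\cdot)$ over feasible vectors $\mathbf{d}\le \mathbf{h}+\mathbf{1}_i$. The plan is to argue by contradiction: suppose some $s_k$ with $k\ge 2$ is disclosed at a positive-probability history, and derive an inequality on on-path posteriors that fails for $\delta$ near $1$.

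The first step is to fix notation for on-path posteriors. Let $D$ be the set of disclosed vectors that occur on path. By Lemma \ref{L0}, the steady-state distributions $\underline\mu,\overline\mu$ are well defined. For any on-path $\mathbf{d}$, the posterior likelihood ratio can be written as a sum over the "generating sequences" since the last communication failure that lead to $\mathbf{d}$. Each such sequence has probability $(1-\delta)\delta^{T}$ times products of $\overline f$ or $\underline f$. The key estimate is the following. Let $\rho_\theta$ be the total probability, under $\theta$, of signals that are concealed at a given step; this is $\delta$-weighted. Each gap between consecutive disclosed signals contributes a geometric factor $\sum_m(\delta\rho_\theta)^m=1/(1-\delta\rho_\theta)$. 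Roughly, then,
\[
\frac{\overline\mu(\mathbf{d})}{\underline\mu(\mathbf{d})}\approx \prod_{j}L_j^{d_j}\cdot\Big(\frac{1-\delta\underline\rho}{1-\delta\overline\rho}\Big)^{|\mathbf{d}|+1}.
\]
When only good signals are disclosed, $\overline\rho<\underline\rho$, so as $\delta\to1$ each extra disclosed signal multiplies the likelihood ratio by roughly $L_j\cdot\frac{1-\underline\rho}{1-\overline\rho}$. This factor is $<1$ unless $L_j$ is large. In particular, if only $s_1$ is ever disclosed, then $\overline\rho=1-\overline f_1$ and $\underline\rho=1-\underline f_1$, so the factor becomes exactly $L_1\cdot \underline f_1/\overline f_1\cdot$(correction) $\to 1$, with the correction coming from $\delta<1$.

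Part 1 (monotone) then runs as follows. Take the on-path disclosed vector $\mathbf{d}$ with $d_k>0$ for some $k\ge2$. The agent who produced $\mathbf{d}$ strictly preferred it to dropping that $s_k$, giving $\pi(\mathbf{d})>\pi(\mathbf{d}-\mathbf{1}_k)$. Monotonicity lets us compare $\mathbf{d}$ with the same-length vector that replaces $s_k$ by $s_1$ when that vector is feasible, and comparisons reduce to on-path vectors. Strict optimality at an on-path information set means both compared vectors (or a chain of them) are on path, so the product formula applies. The inequality $\pi(\mathbf{d})>\pi(\mathbf{d}-\mathbf{1}_k)$ requires $L_k\cdot\frac{1-\delta\underline\rho}{1-\delta\overline\rho}>1$. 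As $\delta\to1$ this tends to $L_k\,\underline f_{G}/\overline f_{G}$, where $G$ is the set of ever-disclosed signals. This limit is at most $L_k/\min_{j\in G}L_j\le1$, and strictly less than $1$ if $G$ contains $s_1$ together with $s_k$. We must handle separately the case where $s_1$ is not disclosed; here monotonicity is what rules it out, by showing that revealing $s_1$ in place of $s_k$ would be a profitable deviation. Part 2 (threshold) is analogous. The threshold structure ensures that disclosed sets are upper sets in likelihood ratio, so $G=\{s_1,\dots,s_J\}$ with $J\ge2$ and $s_1\in G$. The same inequality, applied to the lowest disclosed signal $s_J$, then fails.

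The main obstacle is making the product formula rigorous. Disclosure thresholds may depend on the history, so $\rho_\theta$ is not constant, and the "gap" factors depend on the path. I would first bound each gap factor between the versions computed with the most and least permissive thresholds. Then I would take $\delta^*$ uniform over the finitely many threshold patterns and over history lengths, using the fact that the ratio of gap factors converges as $\delta\to1$ uniformly in the number of disclosed signals. Uniformity in the history length is the delicate point, since errors compound multiplicatively across arbitrarily long histories. I would try to avoid this by comparing only adjacent vectors $\mathbf{d}$ and $\mathbf{d}-\mathbf{1}_k$, so that only a single gap factor is at stake.
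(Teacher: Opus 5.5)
There is a genuine gap, and it sits at the step you flag as the ``main obstacle'': comparing $\pi(\mathbf{d})$ with $\pi(\mathbf{d}-\mathbf{1}_k)$.

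\emph{The product formula is not available.} It is exact only for constant-threshold strategies; it is the computation in the proof of Theorem~\ref{Theorem2}. In a general strict equilibrium, an on-path vector $\mathbf{d}$ can be entered from many predecessor histories, for instance by agents who drop inherited signals while adding their own. The probability of leaving a history also depends on that history. So $\mu_\theta(\mathbf{d})/\mu_\theta(\mathbf{d}-\mathbf{1}_k)$ is not $f_{\theta,k}$ times one gap factor. Sandwiching between ``most and least permissive thresholds'' does not produce such a ratio, and in Part 1 the strategy need not be a threshold strategy at all.

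\emph{The comparison vector may be off path.} The inequality $\pi(\mathbf{d})>\pi(\mathbf{d}-\mathbf{1}_k)$ constrains equilibrium objects only if $\mathbf{d}-\mathbf{1}_k$ occurs with positive probability. Strictness does not give this. Off-path vectors can carry arbitrarily low beliefs, which is exactly how the equilibria of Proposition~\ref{Prop1} are sustained. Your claim that strict optimality puts both compared vectors on path is false.

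\emph{The relevant exit set is local, not $G$.} The factor that must beat $L_k$ is governed by the set of private signals that move the process off that particular history (the paper's $D(\mathbf{h^*})$), not by the global set $G$ of ever-disclosed signals. If that local set were $\{k\}$, then $L_k\frac{1-\delta(1-\underline{f_k})}{1-\delta(1-\overline{f_k})}\geq 1$ for every $\delta$ whenever $L_k>1$, and no contradiction arises. In Part 1 you also need $k$ to be the worst disclosed signal, since your bound $L_k/\min_{i\in G}L_i\le 1$ requires it.

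\textbf{The missing idea} is an extremal choice of history that makes the one-step recursion exact. The paper takes $j$ to be the largest index appearing in any on-path history. It then takes $\mathbf{h^*}$ to be the on-path history with $h_j=1$ that minimizes $h_1+\dots+h_{j-1}$, then $h_1+\dots+h_{j-2}$, and so on.

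It shows that $\mathbf{h^*}$ can be reached only from itself or from $\mathbf{h^*}-\mathbf{1}_j$. The ingredients are gradualism, the improvement principle and its no-turning-back consequence, and either monotonicity (Lemma~\ref{L2}) or the threshold structure (Lemma~\ref{L1}). This makes $\mathbf{h^*}-\mathbf{1}_j$ on path and gives exactly $\mu_\theta(\mathbf{h^*})=\mu_\theta(\mathbf{h^*}-\mathbf{1}_j)\,\delta f_{\theta,j}/\bigl(1-\delta(1-\sum_{i\in D(\mathbf{h^*})}f_{\theta,i})\bigr)$.

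One then needs $1\in D(\mathbf{h^*})$ and $\max D(\mathbf{h^*})\le j$:
\begin{itemize}
\item In the monotone case, $\pi(\mathbf{h^*}-\mathbf{1}_j+\mathbf{1}_1)>\pi(\mathbf{h^*})$, and no-turning-back forces the history to move after $s_1$.
\item In the threshold case, $D(\mathbf{h^*})=\{1,\dots,j\}$.
\end{itemize}
Then the incentive to disclose $s_j$ at $\mathbf{h^*}-\mathbf{1}_j$ fails for $\delta$ near one. Because $j$ and $D(\mathbf{h^*})$ range over finitely many possibilities, $\delta^*$ is uniform. The uniformity-in-history-length problem you worry about never arises.
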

Theorem \ref{Theorem1} implies that as the communication friction vanishes, agents become extremely selective in disclosing information, in the sense that they will conceal all signals except for $s_1$, the signal that has the highest likelihood ratio.
The required magnitude of $\delta^*$ depends on the signal structure $\mathbf{f}$. For example, when $L_1-L_2$ is close to $0$, the value $\delta^*$ is close to $1$.

Together with Proposition \ref{Prop1}, Theorem \ref{Theorem1} leads to a full characterization of strict equilibria that satisfies either one of our refinements when $\delta$ is close to $1$:
\begin{Corollary}\label{cor1}
Fix any $(u,v,\mathbf{f},\pi_0)$. There exists $\delta^* \in (0,1)$ such that for every $\delta > \delta^*$:
\begin{enumerate}
\item Each strict monotone equilibrium is characterized by some $n \in \mathbb{N} \cup \{\infty\}$ such that $\sigma_D (\mathbf{h},s_i)= \min \{ n \mathbf{1}_1 , (\mathbf{h} +\mathbf{1}_i) [1] \}$ for every $\mathbf{h} \in H$ and $i \in \{1,2,...,l\}$. 
\item For every $n \in \mathbb{N} \cup \{\infty\}$, 
there exists a strict monotone equilibrium where 
 agents conceal all signals other than $s_1$ and disclose up to $n$ realizations of $s_1$.
\item There are only two strict threshold equilibria: one where agents never disclose anything and one where agents only disclose signal $s_1$ and disclose all realized $s_1$. 
\end{enumerate}
\end{Corollary}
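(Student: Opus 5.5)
The plan is to derive Corollary \ref{cor1} by combining Theorem \ref{Theorem1} with Proposition \ref{Prop1}, using the same $\delta^*$ as in Theorem \ref{Theorem1}. Fix $\delta>\delta^*$.

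For part 2, no new argument is needed. The equilibria of Proposition \ref{Prop1} with $i=1$ (note $L_1>1$, since the likelihood ratios average to one under $\underline{f}$ and are distinct) are strict. They disclose $\min\{n,h_1+\mathds{1}\{j=1\}\}$ copies of $s_1$ and nothing else. It remains to check that their belief maps can be chosen monotone. On path, only histories of the form $(k,0,\dots,0)$ occur, so the belief is pinned down only there. Off path, I would assign beliefs that depend on the history only through its length and through $h_1$, and are increasing in $h_1$. For example, take the on-path belief at $(\min\{h_1,n\},0,\dots)$ penalised by a factor $\epsilon^{\,\text{length}-h_1}$ for some small $\epsilon$. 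Any pair of histories with the same length that satisfies (\ref{3.1}) and gets the strict inequality at some $n$ must differ in $h_1$, or else the tie is broken by a term strictly ordered on the remaining coordinates. This suggests breaking ties by a lexicographic perturbation $\prod_j \epsilon_j^{h_j}$ with $\epsilon_1>\epsilon_2>\dots$. The map $\mathbf{h}\mapsto\sum_j h_j\log\epsilon_j$ is strictly increasing in the dominance order of (\ref{3.1}) at fixed length, by Abel summation. I would then verify that these beliefs still deter disclosing any $s_j$ with $j\ge 2$, and deter disclosing more $s_1$ than prescribed. This is the delicate point. I would use the fact that the strict incentives of Proposition \ref{Prop1} are driven by pessimistic off-path beliefs, and multiply every such belief by a small enough factor while keeping the order.

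For part 1, take any strict monotone equilibrium. By Theorem \ref{Theorem1}, no $s_j$ with $j\ge2$ is disclosed on path, so every on-path history has the form $k\mathbf{1}_1$. Let $n$ be the supremum of the number of $s_1$ disclosed. I would show that at each on-path history the agent discloses $\min\{n,h_1+\mathds{1}\{i=1\}\}$ copies of $s_1$, by two steps.
\begin{itemize}
\item[(a)] Beliefs on the on-path set $\{k\mathbf{1}_1\}$ are strictly increasing in $k$ for $k\le n$. I would prove this by Bayes' rule: disclosing one more $s_1$ compares the likelihood $L_1$ against the extra bad sequence. Strictness of the equilibrium pins down the comparison.
\item[(b)] Given (a), an agent holding $m$ copies of $s_1$ strictly prefers to disclose $\min\{m,n\}$. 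Disclosing fewer lowers his successor's belief, and disclosing more than $n$ is either off path or contradicts the definition of $n$.
\end{itemize}
Histories off the path are unconstrained by the corollary's formula, except through feasibility. I would argue the formula there follows from the same comparison once beliefs at $k\mathbf{1}_1$ are fixed. If this fails, I would restrict the claim to histories reached with positive probability, as in Theorem \ref{Theorem1}.

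For part 3, a threshold strategy that never discloses $s_2,\dots,s_l$ on path must have $J\in\{0,1\}$ at every on-path history-signal pair. If $J$ is ever $1$ when an $s_1$ is held, then $s_1$ is disclosed. The threshold form forces all held copies of $s_1$ to be disclosed, so $n=\infty$ and, by step (b), $J$ equals $1$ everywhere. Otherwise $J\equiv0$ on path, and since the empty history is then the only on-path history, $J\equiv0$ everywhere. Existence of the $n=\infty$ equilibrium comes from Proposition \ref{Prop1}. Existence of the no-disclosure equilibrium needs a separate check: an off-path belief that disclosure of anything is maximally pessimistic makes concealing strictly optimal.

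I expect step (a) in part 1 to be the main obstacle: ruling out a non-monotone pattern of beliefs over $k\mathbf{1}_1$, such as disclosing $s_1$ only when the agent holds an odd number of copies.
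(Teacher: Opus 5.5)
Your architecture is the paper's: Theorem~\ref{Theorem1} confines on-path histories to $\{k\mathbf{1}_1\}$, and Proposition~\ref{Prop1} with $i=1$ supplies existence. But step (a), which you rightly call the crux, has no working argument.

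\textbf{Step (a).} A Bayes-rule computation is circular here. The steady-state probabilities of $k\mathbf{1}_1$ depend on the on-path strategy. The clean factor $L_1\cdot\frac{1-\delta(1-\underline{f_1})}{1-\delta(1-\overline{f_1})}$ is only available once you already know that agents disclose up to $n$ copies of $s_1$, and that is what (b) is meant to derive from (a).

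The missing idea is the paper's no-turning-back property, which is pure revealed preference. Every on-path $\mathbf{h}\neq\mathbf{0}$ was disclosed by some predecessor at an on-path history-signal pair where every $\mathbf{h}'<\mathbf{h}$ was also feasible. Strictness then gives $\pi(\mathbf{h})>\pi(\mathbf{h}')$. Combine this with gradualism: a path from $\mathbf{0}$ to $k\mathbf{1}_1$ passes through every $m\mathbf{1}_1$ with $m<k$, so all $k\le n$ are on path. This yields (a) at once. It also rules out your odd/even pattern, because an agent at an on-path $k\mathbf{1}_1$ can never pass on fewer copies of $s_1$ than he inherited.

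Part 3 needs exactly this fact too; it is the paper's footnote. The threshold form alone does not give $n=\infty$, since threshold $0$ at $(n\mathbf{1}_1,s_1)$ is also of threshold form. What gives it is that once $J(\mathbf{0},s_1)\ge 1$, no later on-path $k\mathbf{1}_1$ can have threshold $0$, so every held $s_1$ keeps being disclosed. If instead $J(\mathbf{0},s_1)=0$, then $\mathbf{0}$ is the only on-path history.

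\textbf{Off-path histories.} Your primary plan fails, so the fallback is forced. Monotonicity compares only equal-length histories, and on-path incentives constrain only vectors that are feasible from some $k\mathbf{1}_1$.
\begin{itemize}
\item In part 1, nothing prevents $\pi(3\mathbf{1}_2)>\pi(\mathbf{0})$ in a strict monotone equilibrium; it need only lie below the same-length histories that dominate it. The agent at the off-path pair $(2\mathbf{1}_2,s_2)$ then discloses $3\mathbf{1}_2$ rather than $\mathbf{0}$.
\item In part 3, your conclusions ``$J\equiv 0$ everywhere'' and ``$J\equiv 1$ everywhere'' are false off path. For example, in the no-disclosure equilibrium let $\pi(k\mathbf{1}_1)$ exceed $\pi(\mathbf{0})$ and increase in $k$ for $k\ge 2$; then $J(\mathbf{1}_1,s_1)=1$.
\end{itemize}
So parts 1 and 3 can only be established for positive-probability histories, which is also all the paper's argument delivers.

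\textbf{Part 2.} Your first belief construction gives $\pi((n+1)\mathbf{1}_1)=\pi(n\mathbf{1}_1)$, which breaks strictness at $(n\mathbf{1}_1,s_1)$. The clean version puts every off-path history strictly below $\pi(\mathbf{0})$, ordered within each length by $\sum_j h_j\log\epsilon_j$.
\begin{itemize}
\item Monotonicity is then automatic, because $k\mathbf{1}_1$ dominates every other history of length $k$.
\item The prescribed choice is the unique maximizer at every pair, because on-path beliefs strictly increase in $k$ when $L_1>1$.
\item The no-disclosure equilibrium needs no separate check: it is Proposition~\ref{Prop1} with $n=0$.
\end{itemize}
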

The proofs of Theorem \ref{Theorem1} and Corollary \ref{cor1} are both in Section \ref{sub3.2}.

In order to better visualize how communication frictions affect the selectiveness of agents' disclosure,  
our next result allows for any value of $\delta$ and provides a full characterization of equilibria in which agents use 
\textit{constant-threshold strategies} (i.e., \textit{constant-threshold equilibria}). These are strategies where the disclosure threshold is independent of histories and private signals.
\begin{Definition2}
A strategy $\sigma$ is a \textit{constant-threshold strategy} if there exists $j \in \{0,1,...,l\}$
such that 
$\sigma_D (\mathbf{h},s_i)= (\mathbf{h} +\mathbf{1}_i) [j]$ for every $\mathbf{h} \in H$ and $i \in \{1,2,...,l\}$. 
\end{Definition2}
For example, agents will never disclose any signal when the threshold is constantly $0$, and will
disclose all signals when the threshold is constantly $l$. 
Theorem \ref{Theorem2} 
focuses on constant-threshold 
equilibria  and
shows that as $\delta$ increases, agents become increasingly selective in disclosing information, 
in the sense that a smaller set of signals can be disclosed in this class of equilibria.
\begin{Theorem}\label{Theorem2}
Fix any $(u,v,\mathbf{f},\pi_0)$ and $n \in \{2,...,l\}$.
\begin{enumerate}
\item If $L_n >1$, then there exists $\delta_n \in (0,1)$ such that there is a constant-threshold equilibrium where agents use a constant threshold $n$ \textit{if and only if} $\delta \leq \delta_n$. Furthermore, $\delta_n$ is strictly decreasing in $n$. 
\item If $L_n \leq 1$, then for any $\delta$, agents never disclose $s_n$ in any constant-threshold equilibrium.  
\end{enumerate}
\end{Theorem}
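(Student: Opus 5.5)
The plan is to compute the steady-state beliefs explicitly under a constant threshold $n$ and reduce each agent's disclosure problem to a comparison of one-signal likelihood-ratio multipliers. Fix a constant-threshold strategy with threshold $n$, and call $s_1,\dots,s_n$ ``good.'' Let $\overline{p}_n\equiv\sum_{i\le n}\overline{f}_i$ and $\underline{p}_n\equiv\sum_{i\le n}\underline{f}_i$.

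\textbf{Step 1: steady-state beliefs.} Under this strategy, agent $t$'s history consists of exactly the good private signals of the agents $t-1,\dots,t-K$ who lie after the last communication failure. Here $\Pr(K=k)=(1-\delta)\delta^k$, and $K$ is independent of $\theta$; I take this steady state as the unique one given by Lemma \ref{L0}. Conditional on $\theta$, the signals are i.i.d., so the probability of a particular composition $\mathbf{h}$ of length $m$ is
\[
\binom{m}{h_1,\dots,h_n}\prod_{i\le n} f_{\theta,i}^{h_i}\sum_{k\ge m}(1-\delta)\delta^k\binom{k}{m}(1-p_\theta)^{k-m}
=\binom{m}{h_1,\dots,h_n}\prod_{i\le n} f_{\theta,i}^{h_i}\,\frac{(1-\delta)\delta^m}{(1-\delta+\delta p_\theta)^{m+1}} .
\]
Hence the posterior likelihood ratio at an on-path $\mathbf{h}$ equals
\[
\frac{\pi_0}{1-\pi_0}\,R_n(\delta)\prod_{i\le n}\bigl(L_iR_n(\delta)\bigr)^{h_i},
\qquad\text{where } R_n(\delta)\equiv\frac{1-\delta+\delta\underline{p}_n}{1-\delta+\delta\overline{p}_n}.
\]
Every history containing some $s_j$ with $j>n$ is off path, and I assign it belief $0$.

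\textbf{Step 2: incentives.} Actions are trivially optimal given beliefs. Because $v$ is increasing in $a^{t+1}$ and $a^*$ is increasing, the disclosure problem reduces to maximizing the successor's belief. Disclosing any signal $s_j$ with $j>n$ yields belief $0$, so it is weakly dominated. Among good signals, each disclosed $s_i$ multiplies the successor's likelihood ratio by $L_iR_n(\delta)$. Disclosing everything is therefore optimal at every history-signal pair if and only if $L_nR_n(\delta)\ge 1$. The ``only if'' direction uses the pair $(\mathbf{0},s_n)$, which has positive probability. Rearranging, this condition is equivalent to
\[
\phi_n(\delta,L_n)\equiv(1-\delta)(L_n-1)-\delta\sum_{i\le n}\underline{f}_i(L_i-L_n)\ \ge 0 .
\]

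\textbf{Step 3: part 1.} Suppose $L_n>1$, so every good signal has likelihood ratio above $1$. Then $\overline{p}_n>\underline{p}_n$, and $\phi_n(\cdot,L_n)$ is affine in $\delta$. At $\delta=0$ it equals $L_n-1>0$. At $\delta=1$ it equals $-\sum_{i\le n}\underline f_i(L_i-L_n)<0$, which is strictly negative because $n\ge2$ and $L_1>L_n$. So there is a unique $\delta_n\in(0,1)$ with $\phi_n\ge0$ if and only if $\delta\le\delta_n$.

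For monotonicity in $n$, note that $\phi_n(\delta,L)$ is strictly increasing in $L$. Also, $\phi_{n+1}(\delta,L_{n+1})=\phi_n(\delta,L_{n+1})$, because the $i=n+1$ term vanishes. Thus $\phi_{n+1}(\delta_n,L_{n+1})<\phi_n(\delta_n,L_n)=0$, which gives $\delta_{n+1}<\delta_n$. Here I use that $L_{n+1}>1$ whenever threshold $n+1$ is relevant.

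\textbf{Step 4: part 2.} Suppose $L_n\le1$ and the threshold is some $j\ge n$. If $j=l$, then $\overline p_l=\underline p_l=1$, so $R_l=1$ and $L_lR_l=L_l<1$. If $j<l$, I use $\overline p_j-\underline p_j=\sum_{i\le j}\underline f_i(L_i-1)$. The full sum up to $l$ is $0$ and the terms are ordered, with positive terms first; hence every proper partial sum is strictly positive. So $R_j<1$, and since $L_j\le L_n\le 1$, we get $L_jR_j<1$.

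In either case, an agent at $(\mathbf{0},s_j)$ strictly prefers concealing $s_j$. So no constant-threshold equilibrium has a threshold $j\ge n$, and $s_n$ is never disclosed.

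\textbf{Main obstacle.} The step needing the most care is Step 1: deriving the closed form for the steady-state history distribution and matching it to the unique steady state of Lemma \ref{L0}. Once that closed form is in hand, everything else is one-dimensional algebra.
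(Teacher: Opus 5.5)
Your proof is correct and follows the same route as the paper: you derive the closed-form steady-state posterior (the paper's (\ref{3.5})), reduce all incentives to the single constraint $L_nR_n(\delta)\ge 1$ (the paper's (\ref{3.8})), and locate a sign change in $\delta$ between $0$ and $1$. Your affine function $\phi_n$ together with the identity $\phi_{n+1}(\delta,L_{n+1})=\phi_n(\delta,L_{n+1})$ gives a tidier proof that $\delta_n$ decreases in $n$ than the paper's explicit formula for $\delta_n$, and your Step 4 treats thresholds $j>n$ (including $j=l$, where $L_l<1$ follows from your partial-sum observation) more explicitly than the paper does.
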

Theorem \ref{Theorem2} implies that (i) signals with likelihood ratio no more than $1$ will never be disclosed in any 
constant-threshold equilibrium and (ii) a signal with likelihood ratio strictly more than $1$ will be disclosed in some constant-threshold equilibria if and only if the communication success rate $\delta$ is below some cutoff. 
Furthermore, $\delta_2>\delta_3>...$, which means that
the cutoff $\delta$ is higher for signals with higher likelihood ratios, so higher likelihood ratio signals are disclosed in a larger range of $\delta$. 
Since any constant-threshold equilibrium is also a threshold equilibrium,
Theorem \ref{Theorem2} also implies that when $\delta > \delta_2$, there are only two constant-threshold equilibria: the one where agents never disclose anything and the one where agents 
disclose a signal if and only if it is $s_1$.

Our finding  that disclosure becomes more selective as $\delta$ increases contrasts with the results in \citet{dye1985disclosure}. In \citet{dye1985disclosure}'s model, as the probability that the sender has evidence increases, he becomes less selective in the sense that a larger set of realized signals are disclosed in equilibrium.  
In the limiting case studied by \citet{grossman1981informational} and \citet{milgrom1981good} where the sender has evidence for sure, no disclosure leads to the receiver to infer that the sender has the worst possible signal, which motivates the sender to disclose even the second worst signal.

In our model, a larger $\delta$ increases the chances that agents observe the signals disclosed by their predecessors, so absent strategic concerns, their successors should expect them to have more signals. However, disclosure becomes more selective due to the nature of the uncertainty their successors face about their information structures, which differs from that in \citet{dye1985disclosure}. 


To understand the differences, suppose agents are expected to disclose signals in $\{s_1,...,s_i\}$ (i.e., \textit{good signals}) and to conceal signals in $\{s_{i+1},...,s_l\}$ (i.e., \textit{bad signals}). After the last communication failure, the sequence of realized signals consists of some good signals and some subsequences of bad signals (i.e., \textit{bad subsequences}), where the length of each bad subsequence can be anything from $0$ to $\infty$ and is
independent of the number of good signals in the sequence. One useful observation is that the number of bad subsequences equals the number of good signals plus $1$. An example of such a sequence of signals  is depicted in Figure 1.

\begin{figure}
\begin{center}
\begin{tikzpicture}[scale=0.5]
\draw[thick, <->] (-10,0)--(-2,0)node[above]{1}--
(0,0)node[above]{2}--
(6,0)node[above]{3}--
(20,0)node[above]{4}--
(26,0);
\draw[fill=red] (-8,0) circle [radius=0.2];
\draw[fill=red] (-6,0) circle [radius=0.2];
\draw[fill=red] (-4,0) circle [radius=0.2];
\draw[fill=green] (-2,0) circle [radius=0.2];
\draw[dashed, <-] (0,2)--(-1,0.3);
\draw[dashed, <-] (-2,2)--(-1,0.3);
\node at (-1,2.4) {a bad subsequence with length $0$};
\draw[fill=green] (0,0) circle [radius=0.2];
\draw[fill=red] (2,0) circle [radius=0.2];
\draw[fill=red] (4,0) circle [radius=0.2];
\draw[fill=green] (6,0) circle [radius=0.2];
\draw[fill=red] (8,0) circle [radius=0.2];
\draw[fill=red] (10,0) circle [radius=0.2];
\draw[fill=red] (12,0) circle [radius=0.2];
\draw[fill=red] (14,0) circle [radius=0.2];
\draw[fill=red] (16,0) circle [radius=0.2];
\draw[fill=red] (18,0) circle [radius=0.2];
\draw[fill=green] (20,0) circle [radius=0.2];
\draw[fill=red] (22,0) circle [radius=0.2];
\draw[fill=red] (24,0) circle [radius=0.2];
\end{tikzpicture}
\caption{A sequence of realized private signals after the most recent communication failure, with red circles representing bad signals and green circles representing good signals. In this example, there are $4$ good signals and $5$ bad subsequences. The bad subsequence before good signal $1$ has length $3$, the bad sequence between good signals $1$ and $2$ has length $0$, the bad subsequence between good signals $2$ and $3$ has length $2$, the bad subsequence between good signals $3$ and $4$ has length $6$, and the bad subsequence after good signal $4$ has length $2$.}
\end{center}
\end{figure}
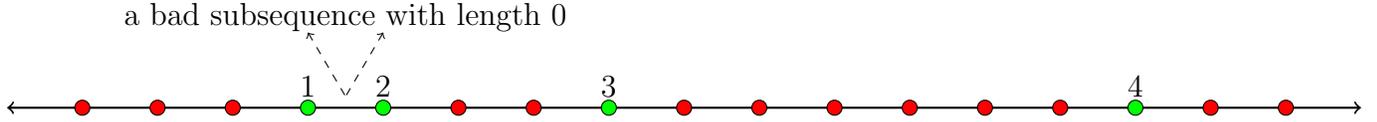

When agents observe the number of good signals but not the length of each bad subsequence, disclosing another good signal shows to the next agent that there is one more good signal and one more bad subsequence. The former has a positive effect on the next agent's belief as long as the likelihood ratio of the good signal exceeds $1$. The latter has a negative effect since bad subsequences are likely to be longer when the state is $\underline{\theta}$. This negative effect is absent in \citet{dye1985disclosure}'s model where disclosing any signal shows to the receiver that no information is hidden.

Therefore, the agents in our model have incentives to disclose a signal if and only if its likelihood ratio is large enough so that the direct effect dominates the indirect effect. As $\delta$ increases, the expected length of each bad subsequence increases conditional on each state, so the indirect effect of an additional bad subsequence 
becomes stronger yet the direct effect remains unchanged. Hence, agents become more selective in disclosing signals as fewer signals have high enough likelihood ratios to offset the increasingly pronounced indirect effect. 

Next,
we present a proof of Theorem \ref{Theorem2}, which helps readers to visualize mathematically the two effects of disclosing an additional signal, as well as how they depend on $\delta$. 

\begin{proof}[Proof of Theorem \ref{Theorem2}:]
 Since $v(\theta,a^{t+1})$ is strictly increasing in $a^{t+1}$ and agent $t+1$'s action is strictly increasing in the probability with which his belief assigns to $\overline{\theta}$, we know that at every $(\mathbf{h},s_i) \in H \times S$, agent $t$ will choose a vector $\mathbf{h'} \leq \mathbf{h} + \mathbf{1}_i$ that maximizes $\pi(\mathbf{h'})$. 

Suppose in equilibrium, all agents use a constant threshold strategy that discloses signals $s_1,...,s_n$. Under the steady state history distribution in state $\theta$,
the probability that an agent observes history $\mathbf{h} \equiv (h_1,h_2,...,h_n,0,...,0)$ is
\begin{equation}\label{3.2}
(1-\delta) \delta^{\overline{h}}  \cdot X_{\theta} (\mathbf{h}) \cdot \prod_{i=1}^n (f_{\theta,i})^{h_i},
\end{equation}
where $\overline{h} \equiv h_1+...+h_n$ denotes the length of history $\mathbf{h}$ and $X_{\theta} (\mathbf{h})$ denotes the coefficient in front of the term $\prod_{i=1}^n (p_i)^{h_i}$  in the following polynomial of $p_1,p_2,...,p_n$:
\begin{equation}\label{3.3}
\sum_{j=\overline{h}}^{\infty} \Big(
p_1+p_2+...+p_n + \delta (1-f_{\theta,1}-...-f_{\theta,n})
\Big)^j.
\end{equation}
A series of algebraic manipulations reveals that 
\begin{equation}\label{3.4}
 X_{\theta} (\mathbf{h})= \Big(1- \delta (1-f_{\theta,1}-...-f_{\theta,n}) \Big)^{-(\overline{h}+1)} \cdot \frac{\overline{h} !}{h_1 ! h_2 ! ... h_n ! }.  
\end{equation}
Therefore, 
\begin{equation}\label{3.5}
\frac{\pi(\mathbf{h})}{1-\pi(\mathbf{h})}=
\frac{\pi_0}{1-\pi_0} \cdot
\Big\{
\frac{1-\delta (1-\underline{f_1}-...-\underline{f_n})}{1-\delta (1-\overline{f_1}-...-\overline{f_n})}
\Big\}^{\overline{h}+1} \cdot 
\prod_{i=1}^n (L_i)^{h_i}. 
\end{equation}
Equation (\ref{3.5}) clearly shows the two effects of disclosing an additional signal. As we mentioned earlier, after the most recent communication failure, when the total number of signals disclosed is $\overline{h}$, there are $\overline{h}+1$ bad subsequences (some of them may have length zero), which explains why the power of the second term on the right-hand-side is $\overline{h}+1$.
By disclosing an additional signal $s_i$, the direct effect multiplies the likelihood ratio by $L_i$, as shown by the third term on the right-hand-side, 
and the indirect effect multiplies the likelihood ratio by 
\begin{equation}\label{3.6}
\frac{1-\delta (1-\underline{f_1}-...-\underline{f_n})}{1-\delta (1-\overline{f_1}-...-\overline{f_n})},
\end{equation}
which is the contribution of an additional bad subsequence. 
Crucially, as long as $\delta >0$ and $0<n<l$, the indirect effect, captured by (\ref{3.6}), is always strictly less than $1$. This is because each bad subsequence only consists of signals with likelihood ratios below some cutoff. The value of (\ref{3.6}) decreases in $\delta$ since a larger $\delta$ increases the expected length of each bad subsequence conditional on the state.  
Therefore, the value of (\ref{3.6}) is bounded below by its limit where $\delta \rightarrow 1$, which equals
\begin{equation}\label{3.7}
\frac{\underline{f_1}+...+\underline{f_n}}{\overline{f_1}+...+\overline{f_n}}.
\end{equation}
This lower bound is more than the inverse of $L_1$ but is less than the inverse of $L_n$. Hence, disclosing signal $s_1$ (the one with the highest likelihood ratio) is sufficient to offset the indirect effect. However, as long as signals other than $s_1$ are disclosed, disclosing the worst one among those signals is not sufficient to overcome the negative indirect effect when $\delta$ is large enough.

For a strategy with a constant threshold $n$ to be part of an equilibrium, a necessary condition is that 
agents prefer to disclose $(h_1,...,h_{n-1},h_n,0,...,0)$ to $(h_1,...,h_{n-1},h_n-1,0,...,0)$ for any $h_n \geq 1$ and $h_1,...,h_{n-1} \geq 0$. 
By (\ref{3.5}), this incentive constraint is satisfied if and only if 
\begin{equation}\label{3.8}
L_n \cdot \frac{1-\delta (1-\underline{f_1}-...-\underline{f_n})}{1-\delta (1-\overline{f_1}-...-\overline{f_n})} \geq 1. 
\end{equation}
This incentive constraint is also sufficient since $L_1>L_2>...>L_n$, so (\ref{3.8}) implies that 
\begin{equation*}
L_i \cdot \frac{1-\delta (1-\underline{f_1}-...-\underline{f_n})}{1-\delta (1-\overline{f_1}-...-\overline{f_n})} \geq 1 \textrm{ for every } i \leq n.
\end{equation*}
Since  
\begin{equation*}
\frac{1-\delta (1-\underline{f_1}-...-\underline{f_n})}{1-\delta (1-\overline{f_1}-...-\overline{f_n})} \leq 1
\end{equation*}
with strict inequality for every $0<n<l$, we know that $L_n>1$ is necessary for (\ref{3.8}). 
Since the left-hand-side of (\ref{3.8}) is strictly decreasing in $\delta$, which converges to $L_n$ as $\delta \rightarrow 0$ and converges to (\ref{3.7}) times $L_n$ as $\delta \rightarrow 1$,
we know that for every $n \geq 2$ that satisfies $L_n>1$, there exists $\delta_n \in (0,1)$ such that inequality (\ref{3.8}) holds if and only if $\delta \leq \delta_n$. Since $\delta_n$ is pinned down by the equation
\begin{equation*}
\underline{f_n}/\overline{f_n} = \frac{1-\delta_n (1-\underline{f_1}-...-\underline{f_n})}{1-\delta_n (1-\overline{f_1}-...-\overline{f_n})},
\end{equation*}
or equivalently,
\begin{equation}\label{deltan}
\delta_n^{-1} -1 = \frac{\underline{f_n}\sum_{j=1}^n \overline{f_j} - \overline{f_n} \sum_{j=1}^n \underline{f_j}}{\overline{f_n}-\underline{f_n}}.
\end{equation}
After some algebraic manipulations, and using the fact that
$L_n>1$, which implies that
$\overline{f_n}> \underline{f_n}$ 
and $\overline{f_{n-1}}> \underline{f_{n-1}}$,
we obtain that 
 $\delta_{n-1}> \delta_n$ if and only if 
\begin{equation*}
\Big\{ \sum_{j=1}^n (\overline{f_j}-\underline{f_j}) \Big\} \cdot \Big\{\underline{f_n} \overline{f_{n-1}}- \overline{f_n} \underline{f_{n-1}} \Big\} >0, 
\end{equation*}
which is true since $L_{n-1}>L_n$ and $\sum_{j=1}^n (\overline{f_j}-\underline{f_j})>0$ for any $n<l$.
\end{proof}

Theorem \ref{Theorem2} implies that once $\delta$ exceeds the cutoff $\delta_n$ defined in equation (\ref{deltan}), there is no constant-threshold equilibrium where agents disclose signal $s_n$.
This seems to suggest that once we focus on constant-threshold equilibria,
the expected amount of information an agent receives in the steady state
drops discontinuously at $\delta_n$.

Proposition \ref{thm:MIcont} shows that there is no such discontinuity. Focusing on the most informative constant-threshold equilibrium, i.e., the one with the highest disclosure threshold in terms of the signal index, the expected amount of information an agent receives, measured by the expected difference in 
Kullback–Leibler divergence between the agent's prior and his belief after observing his history (but before observing his private signal), is continuous and strictly increasing in $\delta$. 

Formally, let $\mathbf{h}_\delta^{(n)}\in H$  
denote the disclosed history observed by an agent in the steady state when all agents use constant threshold strategy $n$. 
For any $\delta \in (\delta_{n+1},\delta_{n}]$,
let $\mathbf{h}_\delta \equiv \mathbf{h}_\delta^{(n)}$ , so that $\mathbf{h}_{\delta}\in H$ is the disclosed history observed by an agent in the steady state when we select the equilibrium with the largest threshold $n$ that exists for each $\delta$. Theorem \ref{Theorem2} implies that
$\mathbf{h}_{\delta}$ is
well-defined. We denote the difference in the expected Kullback–Leibler divergence between an agent's prior belief about the state and his posterior belief by
\begin{equation}\label{eqMutualInfo}
I(\theta;\mathbf{h}_{\delta})=g(\pi_0)-\mathbb E\!\left[g\!\left(\Pr(\theta= \overline{\theta}\mid \mathbf{h}_{\delta})\right)\right],
\end{equation}
where $g(p) \equiv -p\log p-(1-p)\log(1-p)$ is the entropy function. We refer to $I(\theta;\mathbf{h}_{\delta})$ as \textit{mutual information}. Intuitively, $I(\theta;\mathbf{h}_{\delta})$ 
is greater when agents learn more about the state from their histories. 
Proposition \ref{thm:MIcont} shows that the mutual information is continuous and strictly increasing in $\delta$, even when the number of signals disclosed changes at the cutoff values of $\delta$. 
\begin{Proposition}\label{thm:MIcont}
The function
$I(\theta;\mathbf{h}_\delta)$ is continuous and strictly increasing in $\delta$.        
\end{Proposition}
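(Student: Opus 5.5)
The plan is to work directly with the closed form (\ref{3.5}). On each interval $(\delta_{n+1},\delta_n]$ I will prove continuity and strict monotonicity of $I(\theta;\mathbf{h}^{(n)}_\delta)$. I will then check that the two one-sided values agree at each cutoff $\delta_n$. Continuity on the interior of an interval is routine. Under constant threshold $n$, the steady-state law of $\mathbf{h}^{(n)}_\delta$ in state $\theta$ factors as follows. The length $m=\overline{h}$ is geometric, $\Pr(m)=(1-q_\theta)q_\theta^m$ with $q_\theta=\delta F_\theta/(1-\delta(1-F_\theta))$ and $F_\theta=\sum_{j\le n}f_{\theta,j}$. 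Conditional on $m$, the composition is multinomial with cell probabilities $f_{\theta,i}/F_\theta$. This follows from (\ref{3.2})--(\ref{3.4}). The posterior is then a continuous function of $\delta$ at each history, and the history space is countable with geometric tails that are uniform on compact subsets of $(0,1)$. Dominated convergence therefore gives continuity of $\mathbb{E}[g(\cdot)]$.

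The key step is continuity at a cutoff. Write $r_k(\delta)$ for the ratio (\ref{3.6}) with threshold $k$. At $\delta=\delta_n$ we have $L_n r_n(\delta_n)=1$, that is, $r_n=\underline{f_n}/\overline{f_n}$. Since $r_n=(\underline{A}+\delta\underline{f_n})/(\overline{A}+\delta\overline{f_n})$ with $A_\theta=1-\delta(1-F_{\theta,n-1})$, the mediant property forces $r_{n-1}(\delta_n)=\underline{A}/\overline{A}=r_n(\delta_n)$. Plugging into (\ref{3.5}), the threshold-$n$ posterior odds at $\delta_n$ equal $\frac{\pi_0}{1-\pi_0}\,r\prod_{i<n}(L_i r)^{h_i}$, where $r$ denotes this common value. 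This depends only on the projection $(h_1,\dots,h_{n-1})$ and coincides with the threshold-$(n-1)$ posterior at that projection. The projection of $\mathbf{h}^{(n)}_{\delta}$ has exactly the law of $\mathbf{h}^{(n-1)}_{\delta}$, because both record the realized $s_1,\dots,s_{n-1}$ since the last failure. Hence the two posterior random variables have the same distribution at $\delta_n$, and the limit from the right of $\delta_n$ (threshold $n-1$) equals the value at $\delta_n$ (threshold $n$). Combined with the interior continuity, this yields global continuity.

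For strict monotonicity within a fixed $n$, I would first look for a Blackwell garbling from $\mathbf{h}^{(n)}_\delta$ to $\mathbf{h}^{(n)}_{\delta'}$ for $\delta'<\delta$. The natural candidate is thinning the failure process (extra independent failures with probability $1-\delta'/\delta$), but I expect this to fail. Whether an extra failure lands inside a bad subsequence depends on that subsequence's state-dependent length, so the map is not state-independent. I therefore expect this step to be the main obstacle. My fallback is analytic. Since composition given $m$ is an experiment independent of $\delta$ and the odds are multiplicative, I will represent the posterior log-odds as $\log\frac{\pi_0}{1-\pi_0}+\log r+\sum_{k=1}^m Y_k$, where $m$ is geometric with state-dependent $q_\theta$ and the $Y_k=\log(L_{i_k}r)$ are i.i.d. I will then differentiate $\mathbb{E}[g]$ in $\delta$. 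The first approach I will try is a likelihood-ratio order: show that $q_{\overline\theta}/q_{\underline\theta}$ and the two geometric laws spread apart as $\delta$ rises, and use convexity of $-g$. I would then verify strict positivity of the derivative, possibly reducing it to the sign condition $\sum_{j\le n}(\overline{f_j}-\underline{f_j})>0$ already used in the proof of Theorem \ref{Theorem2}. Finally, strict monotonicity on each piece together with continuity at the cutoffs gives the claim on all of $(0,1)$.
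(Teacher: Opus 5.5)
Your continuity argument is sound. The mediant step $r_{n-1}(\delta_n)=r_n(\delta_n)$ makes explicit what the paper only asserts when it says the threshold-$n$ and threshold-$(n-1)$ posteriors coincide at $\delta_n$.

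The gap is monotonicity. You never prove that $I(\theta;\mathbf{h}^{(n)}_\delta)$ increases on each piece, and your reason for abandoning the garbling route is mistaken. The paper's proof is a garbling argument, and a state-independent garbling does exist; it is just not the truncation (``extra failures'') map you tried. Use independent binomial thinning instead: retain each disclosed signal independently with probability $\rho=\frac{\delta'(1-\delta)}{\delta(1-\delta')}\in(0,1)$.
\begin{itemize}
\item By your own factorization, the length of $\mathbf{h}^{(n)}_\delta$ is geometric with odds $q_\theta/(1-q_\theta)=\frac{\delta}{1-\delta}F_\theta$.
\item Thinning a geometric law multiplies its odds by $\rho$, which gives $\frac{\delta'}{1-\delta'}F_\theta$ in both states.
\item Thinning i.i.d.\ draws keeps the composition multinomial with cells $f_{\theta,i}/F_\theta$.
\end{itemize}
Equivalently, thin the complete history: its length is geometric with odds $\delta/(1-\delta)$ regardless of $\theta$, and counting good signals commutes with thinning. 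So $\mathbf{h}^{(n)}_{\delta'}$ is a Blackwell garbling of $\mathbf{h}^{(n)}_\delta$, which gives the weak inequality.

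Your analytic fallback is only a plan, and its proposed first step points the wrong way. The ratio $q_{\overline{\theta}}/q_{\underline{\theta}}=\frac{F_{\overline{\theta}}}{F_{\underline{\theta}}}\,r_n(\delta)$ decreases toward $1$ as $\delta$ rises, so the two length laws do not spread apart in that likelihood-ratio sense. The information gain comes from more draws, which is exactly what the thinning coupling captures.

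For strictness, add a strict data-processing step under the thinning coupling. The paper's own proof also stops at the weak inequality here. Write
\[
I(\theta;\mathbf{h}_\delta)=I(\theta;\mathbf{h}_{\delta'})+I(\theta;\mathbf{h}_\delta\mid\mathbf{h}_{\delta'}).
\]
The last term is positive because $\mathbf{h}_{\delta'}=\mathbf{0}$ occurs jointly, with positive probability, with both $\mathbf{h}_\delta=\mathbf{0}$ and $\mathbf{h}_\delta=\mathbf{1}_1$, and these two histories have different posteriors since $L_1r_n(\delta)>1$:
\begin{itemize}
\item for $n\ge 2$, because $L_1>L_n$ and $L_nr_n(\delta)\ge 1$ on $(\delta_{n+1},\delta_n]$;
\item for $n=1$, because $L_1r_1>1$ if and only if $(1-\delta)(\overline{f_1}-\underline{f_1})>0$.
\end{itemize}
Strict monotonicity on each piece, together with your continuity at the cutoffs, then gives the global claim.
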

The proof is in Appendix \ref{secF}. 

We conclude this section by discussing our refinements and comparing them with the ones in the existing literature. 
First, \citet*{guttman2014not} define \textit{monotone equilibria} and \textit{threshold equilibria} when
a sender can receive up to two signals in two periods.
We view our notions of monotone equilibria and threshold equilibria as natural extensions of their refinements to our setting where the sender can have any number of signals.

Second, Corollary \ref{cor1} implies that when $\delta$ is above some threshold $\delta^*$, for every strict threshold equilibrium, there exists a strict monotone equilibrium where agents use the same strategy but not vice versa. 
However, under general values of $\delta$,
it is unclear whether our monotone belief refinement is more permissive than our threshold strategy refinement.
Intuitively, agents' incentives to use a threshold strategy $J: H \times S \rightarrow H$ imply that they have no incentive to conceal any signal with an index below $J(\mathbf{h},s)$ at any $(\mathbf{h},s)$. Deviations where they conceal some of those signals will also reduce the length of the next agent's history (conditional on communication succeeds). The monotonicity refinement on the belief map has no bite in ruling out such deviations as it only imposes restrictions on histories with the same length. Section \ref{sec5} will revisit the differences between our two refinements when we discuss the differences between our overlapping generations game and one-shot disclosure games between a sender and a receiver where the sender has an unknown number of signals.

Third, as long as $L_2>1$ and
$\delta$ is large enough, both of our refinements conflict with the \textit{truth leaning refinement} in \citet*{hart2017evidence}. In our setting, their refinement requires that (A0) 
an agent will disclose all of his information if doing so maximizes his payoff
and (P0) agents believe that their predecessors did not hide any information upon observing any $\mathbf{h} \in H$ that occurs with zero probability in equilibrium. See page 700 of their paper for details.

We explain why there is no strict equilibrium that satisfies both our threshold strategy refinement and their (P0) requirement when $\delta$ is large. The conflict between our monotone belief refinement and their (P0) requirement follows from the same logic. From Corollary \ref{cor1}, we know that in every strict threshold equilibrium, agents will disclose $\mathbf{0}$ when their history is $\mathbf{0}$ and their private signal is $s_2$. According to Bayes rule, the probability with which their belief assigns to $\overline{\theta}$ upon observing history $\mathbf{0}$ is no more than their prior belief $\pi_0$. If such an equilibrium also satisfies (P0), then given that history $\mathbf{1}_2$ occurring with zero probability, $\pi(\mathbf{1}_2)$ must equal the agent's posterior belief when they know that their predecessor's history is $\mathbf{0}$ and their private signal is $s_2$. Since $L_2>1$, we have $\pi(\mathbf{1}_2) > \pi_0 > \pi (\mathbf{0})$. 
This contradicts our earlier conclusion that agents will disclose $\mathbf{0}$ upon receiving private signal $s_2$ at history $\mathbf{0}$.

Fourth, our refinements are neither stronger nor weaker relative to the receiver's optimal equilibrium used in \citet{rappoport2025evidence} and \citet{gao2025inference}. First, the equilibrium where the sender never discloses any signal satisfies both of our refinements. However, it is not the receiver-optimal equilibrium as the one where the sender only discloses $s_1$ leads to higher receiver welfare. Second, suppose 
$L_1>L_2>1>...$ but $\overline{f_1}$ and $\underline{f_1}$ are much smaller relative to $\overline{f_2}$ and $\underline{f_2}$. When $\delta$ is close to $1$, only signal $s_1$ can be disclosed in any strict equilibrium that satisfies any of our two refinements, whereas the receiver is strict better off in the equilibrium where only signal $s_2$ is disclosed. The latter equilibrium exists for any $\delta$ according to our Proposition \ref{Prop1}.

\section{Proof of Theorem 1}\label{sub3.2}
We state several useful properties of strict equilibria, without imposing the monotone belief refinement and the threshold strategy refinement. We start from a gradualism property.
\begin{Definition3}
For any $t \in \mathbb{Z}$, period $t$ history $\mathbf{h}^t \equiv (h_1^t,...,h_l^t)$, and 
period $t+1$
history $\mathbf{h}^{t+1} \equiv (h_1^{t+1},...,h_l^{t+1})$ that comes after $\mathbf{h}^t$, there exists at most one $i \in \{1,2,...,l\}$ such that $h_i^{t+1} > h_i^{t}$. If such an $i$ exists, then $h_i^{t+1}= h_i^t +1$.  
\end{Definition3}
Intuitively, this property is driven by our modeling assumption that each agent receives only one private signal in addition to the signals passed on to him by his predecessor.

Second, when an agent decides what information to disclose, his objective is to maximize his immediate successor's belief. Since he has the option to ignore his private signal and reveal his history $\mathbf{h}$, from which he will induce the same belief when communication succeeds, the history he discloses in equilibrium must lead to a weakly higher belief compared to $\pi(\mathbf{h})$. This leads to an \textit{improvement principle} similar to the one in \citet{banerjee2004word}.
\begin{Definition4}
For any equilibrium and any $t \in \mathbb{Z}$, if the realized history in period $t$ is $\mathbf{h}^t$ and agent $t$ chooses to disclose $\mathbf{h}^{t+1}$ after observing some private signal at $\mathbf{h}^t$, then $\pi(\mathbf{h}^{t+1}) \geq \pi(\mathbf{h}^t)$. If the equilibrium is strict and $\mathbf{h}^{t+1} \neq \mathbf{h}^t$, then 
$\pi(\mathbf{h}^{t+1}) > \pi(\mathbf{h}^t)$.\footnote{Our improvement principle is stated in terms of beliefs since for each agent in our model, maximizing his expected payoff from his successor's action is equivalent to choosing $\mathbf{h}$ in order to maximize $\pi(\mathbf{h})$. The equivalence between maximizing payoffs and maximizing beliefs breaks down in one of our extensions, in which case a more general version of the improvement principle holds: Disclosing $\mathbf{h}^{t+1}$ leads to a weakly greater expected payoff for agent $t$ compared to disclosing $\mathbf{h}^t$, where the inequality is strict when the equilibrium is strict and $\mathbf{h}^{t+1} \neq \mathbf{h}^t$.}
\end{Definition4}
An implication of our improvement principle is the following \textit{no-turning-back property}.
\begin{Definition5}
In any strict equilibrium, for any history $\mathbf{h}$ that occurs with positive probability in this equilibrium as well as any $\mathbf{h'} < \mathbf{h}$, history $\mathbf{h'}$ will not occur after history $\mathbf{h}$ before the next communication failure. 
\end{Definition5}
This is because agents can disclose any $\mathbf{h'} < \mathbf{h}$ as long as they can disclose $\mathbf{h}$, so their incentive to disclose $\mathbf{h}$ in a strict equilibrium implies that $\pi(\mathbf{h})> \pi(\mathbf{h'})$. Iteratively apply the improvement principle, we know that no one will disclose $\mathbf{h'}$ after $\mathbf{h}$ until communication fails.

Using these properties, one can see why Corollary \ref{cor1} is a direct implication of Theorem \ref{Theorem1}. Since only $s_1$ can be disclosed when $\delta$ is large enough and any strict equilibrium needs to respect the no-turning-back property, there are only two threshold strategies that meet both requirements: one where no signal is disclosed and one where agents disclose all realized $s_1$.\footnote{The no-turning-back property rules out strategies where at histories that occur with positive probability and contain a positive number of $s_1$, 
agents set the disclosure threshold to be $0$, i.e., conceal all realized signals.} 
When $\delta$ is large enough, strict monotone equilibrium turns out to be a more permissive concept: any strategy where agents disclose up to $n$ realizations of $s_1$ can be part of such an equilibrium as long as the belief map $\pi$ assigns a low value at any history with length more than $n$.

We now return to the proof of Theorem \ref{Theorem1}. Fix any strict equilibrium $(\sigma,\underline{\mu},\overline{\mu},\pi)$. For convenience, we write $\mu_{\theta}$ as the probability measure over histories induced by $\sigma$ conditional on  state $\theta$.  
Let $H_{\sigma} \subset H$ denote the set of histories that occur with positive probability under $\underline{\mu}$. Let $j$ denote the largest integer in $\{0,1,...,l\}$ such that there exists $\mathbf{h} \equiv (h_1,...,h_l)\in H_{\sigma}$ with $h_j \geq 1$. Our gradualism property implies that there exists $\mathbf{h} \in H_{\sigma}$ with $h_j=1$. 

Let $H(j) \subset H_{\sigma}$ denote the set of positive-probability histories under $\sigma$ such that the $j$th entry is $1$. Let $H (j-1) \subset H(j)$ denote the subset of $H(j)$ such that a history $\mathbf{h}$ belongs to $H(j-1)$ if and only if it minimizes $h_1+...+h_{j-1}$ among all vectors in $H(j)$. Iteratively define $H(j-2),...,H(1)$ with 
$H(1) \subset H(2) \subset ... \subset H(j-2) \subset H(j-1)$ 
such that for every $i \in \{1,2,...,j-2\}$, history
$\mathbf{h}$ belongs to $H(i)$ if and only if it minimizes $h_1+...+h_i$ among all vectors in $H(i+1)$. This leads to a unique vector in $H(1)$, which we denote by $\mathbf{h^*} \equiv (h_1^*,...,h_l^*)$. 
By construction, $h_{j+1}^*=...=h_l^*=0$ and $h_j^*=1$.

\paragraph{Strict Monotone Equilibria:}  Suppose for some $\delta$ is close enough to $1$, there exists a strict monotone equilibrium under which $j \geq 2$. The monotonicity of the belief map $\pi$ implies that $\pi(\mathbf{h^*}) \leq \pi(\mathbf{h})$ for any $\mathbf{h} \in H (j-1)$. We establish the following lemma, which shows that $\mathbf{h^*}$ can only originate from $\mathbf{h^*}-\mathbf{1}_j$.
\begin{Lemma}\label{L2}
Fix any strict monotone equilibrium. If an agent's history is $\mathbf{h^*}$, then his immediate predecessor's history is either $\mathbf{h^*}$ or $\mathbf{h^*}-\mathbf{1}_{j}$. 
\end{Lemma}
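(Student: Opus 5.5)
The plan is to fix an agent whose history is $\mathbf{h^*}$ and let $\mathbf{h'}\in H_{\sigma}$ be his predecessor's history. Since $\mathbf{h^*}\neq \mathbf{0}$ (because $h^*_j=1$), communication must have succeeded, so the predecessor, holding some private signal $s_i$, chose to disclose $\mathbf{h^*}\le \mathbf{h'}+\mathbf{1}_i$. By definition of $j$, $\mathbf{h'}$ has zero entries beyond $j$. The gradualism property says $\mathbf{h^*}$ exceeds $\mathbf{h'}$ in at most one coordinate, and by exactly one.

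The repeatedly used tool is a \emph{swap deviation}, combining strictness with the monotone belief refinement. Suppose the predecessor could feasibly have disclosed $\mathbf{h^*}-\mathbf{1}_j+\mathbf{1}_m$ for some $m<j$. This vector has the same length as $\mathbf{h^*}$ and dominates it in the sense of (\ref{3.1}), because a signal with index $j$ is replaced by one with a smaller index. Monotonicity then gives $\pi(\mathbf{h^*}-\mathbf{1}_j+\mathbf{1}_m)>\pi(\mathbf{h^*})$, contradicting the optimality of disclosing $\mathbf{h^*}$. So it suffices to show that whenever $\mathbf{h'}\notin\{\mathbf{h^*},\mathbf{h^*}-\mathbf{1}_j\}$, either such a swap is feasible or some other earlier property is violated.

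\emph{Case A: $h'_j=0$.} The coordinate that increased must be $j$, so $i=j$ and $h^*_k\le h'_k$ for all $k\neq j$. If $h'_k=h^*_k$ for every $k\ne j$, then $\mathbf{h'}=\mathbf{h^*}-\mathbf{1}_j$ and we are done. Otherwise $h'_m>h^*_m$ for some $m<j$, the swap to $\mathbf{h^*}-\mathbf{1}_j+\mathbf{1}_m$ is feasible, and we reach a contradiction.

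\emph{Case B: $h'_j\ge 1$.}
\begin{itemize}
\item[(B1)] If $\mathbf{h^*}\le\mathbf{h'}$, then either $\mathbf{h^*}=\mathbf{h'}$ or $\mathbf{h^*}<\mathbf{h'}$. The latter is excluded by the no-turning-back property, since $\mathbf{h'}$ occurs with positive probability.
\item[(B2)] Otherwise $h^*_k=h'_k+1$ for exactly one $k\neq j$ (necessarily $k<j$), with $i=k$, and $h^*_m\le h'_m$ for all $m\neq k$.
  \begin{itemize}
  \item[(B2a)] If some $m<j$ with $m\neq k$ has $h'_m>h^*_m$, the swap to $\mathbf{h^*}-\mathbf{1}_j+\mathbf{1}_m$ is feasible, a contradiction.
  \item[(B2b)] If instead $\mathbf{h'}=\mathbf{h^*}-\mathbf{1}_k$, then $\mathbf{h'}\in H(j)$, because $h'_j=1$. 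But $\mathbf{h'}$ has a strictly smaller value of $h_1+\dots+h_{j-1}$ than $\mathbf{h^*}$ (indeed a smaller partial sum at every level $i\ge k$). This contradicts the iterative minimality defining $H(j-1)\supset\dots\supset H(1)=\{\mathbf{h^*}\}$.
  \end{itemize}
\end{itemize}

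The main obstacle is the residual subcase of (B2) in which the only surplus of $\mathbf{h'}$ over $\mathbf{h^*}-\mathbf{1}_k$ lies in coordinate $j$, that is $h'_j\ge 2$. There $\mathbf{h'}\notin H(j)$, so minimality does not apply directly. A swap that uses the private signal $s_k$ is also unavailable, because all copies of $s_k$ are already used in $\mathbf{h^*}$.

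My first attempt is to trace the positive-probability path from the last communication failure to $\mathbf{h'}$. By gradualism, this path passes through a history $\mathbf{g}$ with $g_j=1$, which lies in $H(j)$. Along the path, the improvement principle makes beliefs strictly increasing, and the no-turning-back property applies at each step. I would try to show that the restriction of $\mathbf{g}$ to coordinates $1,\dots,j-1$ is at most that of $\mathbf{h'}$, and so lies strictly below $\mathbf{h^*}$ in the partial-sum order, again contradicting the construction of $\mathbf{h^*}$. If that fails, the fallback is to compare $\pi(\mathbf{h^*})$ with $\pi(\mathbf{h'})$ via monotonicity on equal-length truncations and combine this with the strict improvement $\pi(\mathbf{h^*})>\pi(\mathbf{h'})$.
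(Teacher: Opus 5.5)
\textbf{There is a genuine gap.} Your cases A, B1, B2a and B2b are correct. However, the proposal leaves open exactly the case that needs a non-local argument: $\mathbf{h'}=\mathbf{h^*}-\mathbf{1}_k+c\,\mathbf{1}_j$ with $k<j$, $c\ge 1$ and private signal $s_k$.

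Your fallback cannot close it. The best sub-vector of $\mathbf{h'}$ with the length of $\mathbf{h^*}$ is $\mathbf{h^*}-\mathbf{1}_k+\mathbf{1}_j$, and $\mathbf{h^*}$ strictly dominates it in the sense of (\ref{3.1}). So every comparison that monotonicity and strictness yield from $\mathbf{h'}$ and $\mathbf{h^*}$ alone is consistent with $\pi(\mathbf{h^*})>\pi(\mathbf{h'})$.

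The contradiction has to come from earlier on the path, which is how the paper handles its Case 3 ($\hat h_j\ge 2$). It takes an earlier history $\mathbf{h^{**}}$ on the same path with $j$th entry $1$. It then exhibits a same-length $\mathbf{h}\le\mathbf{h^{**}}$ with $\pi(\mathbf{h})\ge\pi(\mathbf{h^*})$, hence $\pi(\mathbf{h^{**}})\ge\pi(\mathbf{h^*})$, contradicting the strict improvement principle along the path.

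Your first attempt points in this direction, but it does not go through as stated. For an arbitrary $\mathbf{g}$ on the path with $g_j=1$, the bound $\mathbf{g}[j-1]\le\mathbf{h'}[j-1]$ is neither proved nor implied by the properties you cite. Once the $j$th coordinate returns to $0$, nothing in gradualism, the improvement principle or no-turning-back prevents lower-index counts from falling.

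\textbf{The missing ingredient follows from your own swap deviation.} In a monotone equilibrium, an agent who discloses at least one $s_j$ must disclose every available signal with index below $j$. Otherwise, replacing one disclosed $s_j$ by a concealed $s_m$ with $m<j$ is feasible, preserves length, and strictly raises the belief. Hence on any stretch of the path where the $j$th coordinate stays positive, coordinates $1,\dots,j-1$ are coordinatewise non-decreasing.

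Now write $\mathbf{h}(\tau)$ for the history in period $\tau$. Let $\tau_0$ be the last period up to $t-1$ at which the $j$th coordinate is $0$; it is $0$ at the last communication failure, and $\tau_0\le t-2$ because $h'_j\ge 2$. Put $\mathbf{g}\equiv\mathbf{h}(\tau_0+1)$.
\begin{itemize}
\item By gradualism $g_j=1$, so $\mathbf{g}\in H(j)$.
\item Because the $j$th coordinate stays positive from $\tau_0+1$ to $t-1$, we get $\mathbf{g}[j-1]\le\mathbf{h'}[j-1]=\mathbf{h^*}[j-1]-\mathbf{1}_k$.
\item Therefore $\sum_{i<j}g_i<\sum_{i<j}h^*_i$, contradicting the definition of $H(j-1)$.
\end{itemize}

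Applying the same observation to the final transition as well gives $\mathbf{g}[j-1]\le\mathbf{h'}[j-1]\le\mathbf{h^*}[j-1]$ whenever $h'_j\ge 1$. Minimality then forces equality, so the whole lemma follows at once, with no-turning-back ruling out $h'_j\ge 2$.

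Choosing $\mathbf{g}$ this way also means you never need $\mathbf{h^*}$ to dominate other elements of $H(j)$ at every partial-sum level below $j-1$. The lexicographic construction of $\mathbf{h^*}$ does not by itself guarantee that.
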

The proof is in Appendix \ref{secB}.

For every $\mathbf{h} \in H_{\sigma}$, let $D(\mathbf{h}) \equiv \{i:\sigma_D(\mathbf{h},s_i) \neq \mathbf{h}\} \subset \{1,2,...,l\}$. Given the no turning-back property, this implies that for any $\mathbf{h} \in H_{\sigma}$, $i \in D(\mathbf{h})$ if and only if the disclosed vector after receiving private signal $s_i$ at history $\mathbf{h}$ is not weakly less than $\mathbf{h}$, or equivalently, the number of signal $s_i$ disclosed is more than the $i$th entry of $\mathbf{h}$. 
For any $\mathbf{h}=(h_1,...,h_l)$ with $h_i \neq 0$ for some $i \geq 2$, we have $D(\mathbf{h}) \neq \emptyset$. This is because $\pi$ being monotone implies that $\pi(\mathbf{h}-\mathbf{1}_i+ \mathbf{1}_1) > \pi(\mathbf{h})$, so $1 \in D(\mathbf{h}) \neq \emptyset$. 
Whenever the set $D(\mathbf{h})$ is non-empty, let $\overline{d} (\mathbf{h}) \equiv \max D(\mathbf{h})$.

Since the equilibrium is strict, we know that 
 $\pi(\mathbf{h}) > \pi(\mathbf{h'})$
for every $\mathbf{h} \in H_{\sigma}$ and $\mathbf{h'}< \mathbf{h}$. Hence, if the agent receives private signal $s_i$ at history $\mathbf{h}$ with $i \notin D(\mathbf{h})$, he will disclose history $\mathbf{h}$ to his immediate successor, so the history will remain the same conditional on communication succeeds. This together with Lemma \ref{L2} leads to
\begin{equation}\label{3.11}
\mu_{\theta} (\mathbf{h^*}) = \mu_{\theta}(\mathbf{h^*}-\mathbf{1}_j) \cdot \delta f_{\theta,j} \cdot 
\sum_{n=0}^{\infty} \delta^n (1-\sum_{i \in D( \mathbf{h^*} )} f_{\theta,i})^n
= \mu_{\theta}(\mathbf{h^*}-\mathbf{1}_j) \cdot \frac{\delta f_{\theta,j} } {1-\delta (1-\sum_{i \in D(\mathbf{h^*})} f_{\theta,i})}.
\end{equation}
Equation (\ref{3.11}) implies that $\pi(\mathbf{h^*}) \geq \pi(\mathbf{h^*}-\mathbf{1}_j)$ if and only if 
\begin{equation}\label{3.12}
L_j \cdot \frac{1-\delta (1-\sum_{i \in D(\mathbf{h^*})} \underline{f_i})}{1-\delta (1-\sum_{i \in D(\mathbf{h^*})} \overline{f_i})} \geq 1.
\end{equation}
Recall our hypothesis that $j \geq 2$, which by our earlier conclusion, implies that $D(\mathbf{h^*})$ is non-empty and does not contain anything strictly greater than $j$. Since $S$ is a finite set, we know that there exists $\delta^* \in (0,1)$ such that for every $\delta>\delta^*$, inequality (\ref{3.12}) fails for any $j$ and non-empty $D(\mathbf{h^*})$ that satisfy $j \geq 2$ and $j \geq \max D(\mathbf{h^*})$. 
When 
$\delta>\delta^*$,
the above contradiction rules out strict monotone equilibria where 
signals other than $s_1$ are disclosed on the equilibrium path.

\paragraph{Strict Threshold Equilibria:} Suppose for some $\delta$ is close to $1$, there exists a strict threshold equilibrium where agents use a threshold strategy $\sigma$ under which $j \geq 2$. We establish the following lemma, which is the analog of Lemma \ref{L2} when the solution concept is threshold equilibrium. 
\begin{Lemma}\label{L1}
Fix any threshold equilibrium.
If an agent's history is $\mathbf{h^*}$, then his immediate predecessor's history is either $\mathbf{h^*}$ or $\mathbf{h^*}-\mathbf{1}_{j}$. 
\end{Lemma}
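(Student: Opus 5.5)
Fix a threshold equilibrium with threshold function $J$, and let $\mathbf{g}$ be the predecessor's history, so that the current agent's history is $\mathbf{h^*}$. By the gradualism property, $\mathbf{h^*}$ is obtained from $\mathbf{g}$ as follows: the predecessor receives a private signal $s_i$ and discloses $(\mathbf{g}+\mathbf{1}_i)[J]$ with $J=J(\mathbf{g},s_i)$, and the result equals $\mathbf{h^*}$. Since $h^*_{j+1}=\dots=h^*_l=0$ and $h^*_j=1\ge 1$, we must have $J\ge j$. Therefore $\mathbf{h^*}$ agrees with $\mathbf{g}+\mathbf{1}_i$ in the first $j$ coordinates, i.e.
\[
\mathbf{h^*}[j]=(\mathbf{g}+\mathbf{1}_i)[j].
\]
This identity pins down $\mathbf{g}[j]$ and leaves two cases: $i\le j$ and $i>j$.

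\emph{Case $i>j$.} Then $\mathbf{g}[j]=\mathbf{h^*}[j]=\mathbf{h^*}$, and $\mathbf{g}$ can differ from $\mathbf{h^*}$ only in coordinates above $j$. But $\mathbf{g}\in H_\sigma$, and by the definition of $j$ no positive-probability history has a positive entry above $j$. Hence $\mathbf{g}=\mathbf{h^*}$.

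\emph{Case $i\le j$.} Here $\mathbf{g}[j]=\mathbf{h^*}-\mathbf{1}_i$, and the same maximality-of-$j$ argument gives $\mathbf{g}=\mathbf{h^*}-\mathbf{1}_i$. This requires $h^*_i\ge 1$. If $i=j$ we get $\mathbf{g}=\mathbf{h^*}-\mathbf{1}_j$, as claimed.

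The remaining subcase $i<j$ is the main obstacle. It gives $g_j=1$, so $\mathbf{g}\in H(j)$, and $\mathbf{g}$ has strictly smaller $i$-th coordinate than $\mathbf{h^*}$ while agreeing elsewhere. I would contradict the iterative minimality defining $\mathbf{h^*}$.

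\begin{itemize}
\item First, $\mathbf{g}$ and $\mathbf{h^*}$ share coordinates $i+1,\dots,j$.
\item Next, I would show by downward induction on $k$ from $j-1$ to $i$ that $\mathbf{g}\in H(k)$. At each step $k>i$, the partial sum $g_1+\dots+g_k$ equals $h^*_1+\dots+h^*_k-1$. This is strictly smaller than the minimum attained by $\mathbf{h^*}$ in $H(k+1)$, provided $\mathbf{g}\in H(k+1)$.
\item The conclusion is already reached at the first step, $k=j-1$. Since $\mathbf{g}\in H(j)$ and $\sum_{m\le j-1}g_m<\sum_{m\le j-1}h^*_m$, this contradicts $\mathbf{h^*}\in H(j-1)$, which requires $\mathbf{h^*}$ to minimize $h_1+\dots+h_{j-1}$ over $H(j)$.
\end{itemize}

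So the subcase $i<j$ is impossible, and only $\mathbf{g}=\mathbf{h^*}$ or $\mathbf{g}=\mathbf{h^*}-\mathbf{1}_j$ survive.

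Strictness is not needed anywhere in this argument, consistent with the statement, which concerns any threshold equilibrium. The argument uses only the threshold structure, gradualism, the maximality of $j$, and the lexicographic minimality of $\mathbf{h^*}$.

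The one point to double-check is the requirement $J\ge j$. One might worry that the predecessor truncates below $j$ and the $j$-th entry comes from elsewhere. This cannot happen: truncation at $J<j$ sets the $j$-th entry to $0$, whereas $h^*_j=1$.
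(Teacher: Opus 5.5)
Your proof is correct and rests on the same three ingredients as the paper's. The first is the threshold form $\sigma_D(\mathbf{g},s_i)=(\mathbf{g}+\mathbf{1}_i)[J]$ with $J\ge j$; your first step really uses this, together with communication having succeeded because $\mathbf{h^*}\neq\mathbf{0}$, rather than gradualism. The other two are the maximality of $j$ and the minimality of $h_1+\dots+h_{j-1}$ that defines $\mathbf{h^*}$ within $H(j)$.

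The paper splits cases by the predecessor's $j$-th entry ($0$, $1$, or at least $2$) instead of by the index of his private signal. Your subcase $i<j$ is exactly its case $h'_j=1$ with $\mathbf{h'}\neq\mathbf{h^*}$, and your explicit formula $\mathbf{g}=\mathbf{h^*}-\mathbf{1}_i$ makes its case $h'_j\ge 2$ automatic.
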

The proof is in Appendix \ref{secA}.

Next, consider the agent's disclosure behavior at history $\mathbf{h^*}$. Since the equilibrium is strict, $\pi(\mathbf{h^*}) > \pi(\mathbf{h})$ for any $\mathbf{h} < \mathbf{h^*}$. The definition of $j$ implies that upon receiving any signal $s_k$ with $k > j$ at $\mathbf{h^*}$, the disclosure threshold must be $j$, i.e., signals with index more than $j$ are all concealed and those with index no more than $j$ are all revealed. Upon receiving any signal $s_i$ with $i \leq j$ at $\mathbf{h^*}$, the history in the next period will not be $\mathbf{h^*}$ regardless of the threshold the agent uses. Hence, conditional on state $\theta$ and the current-period history being $\mathbf{h^*}$, the probability with which history remains at $\mathbf{h^*}$ in the next period is $\delta (1-\sum_{i=1}^j f_{\theta,i} )$.

Recall from Lemma \ref{L1} that when an agent's history is $\mathbf{h^*}$, his immediate predecessor's history is either $\mathbf{h^*}$ or $\mathbf{h^*}-\mathbf{1}_{j}$. The probability with which history $\mathbf{h^*}$ occurs conditional on state $\theta$ is
\begin{align}\label{3.9}
\mu_{\theta} (\mathbf{h^*}) 
&= \mu_{\theta} (\mathbf{h^*}-\mathbf{1}_j) \cdot \delta f_{\theta,j} \cdot 
\sum_{k=0}^{\infty} \delta^k \left(1-\sum_{i=1}^j f_{\theta,i}\right)^k \notag \\
&= \mu_{\theta} (\mathbf{h^*}-\mathbf{1}_j) \cdot 
\frac{\delta f_{\theta,j}}{1-\delta \left(1-\sum_{i=1}^j f_{\theta,i}\right)}.
\end{align}
After receiving private signal $s_j$ at history $\mathbf{h^*}-\mathbf{1}_j$, 
an agent has an incentive to disclose history $\mathbf{h^*}$ to his successor only if $\pi(\mathbf{h^*}) \geq \pi (\mathbf{h^*}-\mathbf{1}_j)$, which according to (\ref{3.9}), translates into
\begin{equation}\label{3.10}
L_j \cdot \frac{1-\delta (1-\sum_{i=1}^j \underline{f_i})}{1-\delta (1-\sum_{i=1}^j \overline{f_i})} \geq 1.
\end{equation}
When $\delta$ is above some cutoff, inequality (\ref{3.10}) fails for any $j \geq 2$, which leads to a contradiction.

\section{Discussion}
In Section \ref{sec5}, we discuss the connections between our overlapping generations model and one-shot disclosure games between a sender and a receiver where the number of signals the sender has follows a geometric distribution.
In Sections \ref{secDiscConstraintN} and
\ref{sechistorydependent}, we study two extensions of our baseline model, one where agents face a constraint on the maximal number of signals they can disclose and one where the communication success rate depends on the set of signals transmitted. 

\subsection{Comparison to one-shot disclosure games}\label{sec5}
A natural question is  whether our conclusions extend to a one-shot disclosure game between a sender and a receiver
where the sender has an unknown number of signals. 
Suppose the state is binary $\theta \in \{\overline{\theta},\underline{\theta}\}$.
The sender wants to maximize the probability the receiver's belief assigns to state $\overline{\theta}$.
The number of verifiable (conditionally independent) signals the sender has is drawn from a geometric distribution with parameter $\delta$. The receiver only observes the number of each signal the sender discloses, so that each receiver's information set belongs to the set $H \equiv \mathbb{N}^l$.

In this game, 
the sender's strategy is $\sigma_D: H \rightarrow H$ subject to the constraint that $\sigma_D(\mathbf{h}) \leq \mathbf{h}$ for every $\mathbf{h} \in H$. For any $\mathbf{h} \in H$ and $i \in \{0,1,...,l\}$, 
recall that  $\mathbf{h}[i] \in H$ is defined as a vector where the first $i$ entries coincide with $\mathbf{h}$ and the other entries are zero. We say that $\sigma_D$ is a \textit{threshold strategy} if there exists $J : H \rightarrow \{0,1,...,l\}$ such that $\sigma_D(\mathbf{h})=\mathbf{h}[J(\mathbf{h})]$ for every $\mathbf{h} \in H$. A threshold strategy is a \textit{constant threshold strategy} if $J (\mathbf{h})=J (\mathbf{h'})$ for every $\mathbf{h} \neq \mathbf{h'}$.  
The receiver's belief map is $\pi : H \rightarrow [0,1]$. 
The monotonicity property of the belief map is defined in the same way as in our overlapping generations model.  
The solution concept is Perfect Bayesian equilibrium. Our definitions of monotone equilibrium, threshold equilibrium, and constant threshold equilibrium extend naturally to this one-shot game.

Statement 1 of our Theorem \ref{Theorem1}, which concerns strict monotone equilibria, fails to extend to this one-shot game, thereby distinguishing the one-shot game from our overlapping-generations game.
Intuitively, this is because the gradualism property stated in Section \ref{sub3.2} breaks down, which means that Lemma \ref{L2} is no longer true. In the one-shot game, the sender will observe every combination of signals with positive probability, as opposed to only a subset of those combinations determined by players' equilibrium strategy in our overlapping generations game.

In what follows, we construct strict monotone equilibria of the one-shot game under which signals other than $s_1$ is disclosed with positive probability. 
We introduce a class of strategies for the sender parameterized by integer $k \in \mathbb{N}$.
\begin{itemize}
\item Strategy $k$: For every $j \neq k+1$, if the sender observes $j$ copies of signal $s_1$, then he discloses all the $s_1$ he has and nothing else. If the sender has $k+1$ signal $s_1$ but no $s_2$, then he discloses $k$ signal $s_1$ and nothing else. If the sender has $k+1$ signal $s_1$ and at least one $s_2$, then he discloses $k+1$ signal $s_1$, one $s_2$, and nothing else. 
\end{itemize}

Strategy $k$ is \textit{not} a threshold strategy. This is because when the sender has $k+1$ signal $s_1$ and two  $s_2$, he discloses only one $s_2$ and conceals the other. If the sender uses Strategy $k$ in the one-shot game, 
$s_2$ will be disclosed at some positive-probability information set since the sender will observe $k+1$ signal $s_1$ and one $s_2$ with positive probability.
However, if agents use Strategy $k$ in our overlapping generations game, 
$s_2$ will \textit{not} be disclosed on the equilibrium path as history $(k+1) \mathbf{1}_1$ will occur with zero probability.
\begin{Proposition}\label{Prop4}
For any 
$\delta \in (0,1)$, there exists $k \in \mathbb{N}$ such that there exists
a strict monotone equilibrium in the one-shot disclosure game
where the sender uses Strategy $k$. 
\end{Proposition}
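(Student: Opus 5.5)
The plan is to verify Strategy $k$ directly, for $k$ large. I will compute the posterior at every on-path information set, assign off-path beliefs that are tiny but still monotone, and then check strict optimality of the sender. Let $N$ be the number of signals, with $\Pr(N=m)=(1-\delta)\delta^m$, and define
\[
r \equiv \frac{1-\delta(1-\underline{f_1})}{1-\delta(1-\overline{f_1})}.
\]
Summing the binomial series as in (\ref{3.4}), the event ``exactly $j$ copies of $s_1$'' has probability $(1-\delta)(\delta f_{\theta,1})^j/(1-\delta(1-f_{\theta,1}))^{j+1}$. Its likelihood ratio is therefore $r^{j+1}L_1^j$. From the discussion of (\ref{3.7}) we have $r<1$ and $rL_1>1$, so this likelihood ratio is strictly increasing in $j$, by the factor $rL_1$ per extra copy.

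\textbf{On-path beliefs.} The on-path information sets under Strategy $k$ are $j\mathbf{1}_1$ for $j\neq k+1$ and $(k+1)\mathbf{1}_1+\mathbf{1}_2$.
\begin{itemize}
\item For $j\notin\{k,k+1\}$, the set $j\mathbf{1}_1$ corresponds exactly to the event above.
\item The set $k\mathbf{1}_1$ pools ``exactly $k$ copies of $s_1$'' with ``exactly $k+1$ copies of $s_1$ and no $s_2$''.
\item The set $(k+1)\mathbf{1}_1+\mathbf{1}_2$ corresponds to ``exactly $k+1$ copies of $s_1$'' minus ``exactly $k+1$ copies of $s_1$ and no $s_2$''.
\end{itemize}
The subtracted and pooled terms have probability $(1-\delta)(\delta f_{\theta,1})^{k+1}/(1-\delta(1-f_{\theta,1}-f_{\theta,2}))^{k+2}$. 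Relative to the main term they are of order
\[
\left(\frac{1-\delta(1-f_{\theta,1})}{1-\delta(1-f_{\theta,1}-f_{\theta,2})}\right)^{k+2},
\]
which vanishes geometrically in $k$. Hence, for $k$ large, the likelihood ratio at $(k+1)\mathbf{1}_1+\mathbf{1}_2$ is approximately $r^{k+2}L_1^{k+1}$. The likelihood ratio at $k\mathbf{1}_1$ is approximately $r^{k+1}L_1^k$, and is strictly above that of $(k-1)\mathbf{1}_1$ because pooling only adds a term whose likelihood ratio is higher. Since $rL_1>1$, choosing $k$ large yields the strict chain
\[
\pi(\mathbf{0})<\pi(\mathbf{1}_1)<\dots<\pi(k\mathbf{1}_1)<\pi\big((k+1)\mathbf{1}_1+\mathbf{1}_2\big)<\pi\big((k+2)\mathbf{1}_1\big)<\pi\big((k+3)\mathbf{1}_1\big)<\cdots .
\]
Making this chain precise, with explicit error bounds uniform in the relevant comparisons, is the main computational step.

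\textbf{Off-path beliefs and monotonicity.} Fix a strictly monotone function $\phi$ on $H$, for instance $\phi(\mathbf{h})=\sum_i (l-i)h_i$ normalized into $(0,1)$. For every off-path $\mathbf{h}$, set $\pi(\mathbf{h})=\varepsilon\phi(\mathbf{h})$ with $\varepsilon$ small enough that all off-path beliefs lie below $\pi(\mathbf{0})$. Monotonicity must then be checked only where an off-path history dominates an on-path one of the same length in order (\ref{3.1}).
\begin{itemize}
\item Each on-path $j\mathbf{1}_1$ is the maximum of its length class, so nothing dominates it.
\item In length $k+1$, the top element $(k+1)\mathbf{1}_1$ is itself off path.
\item In length $k+2$, the histories dominating $(k+1)\mathbf{1}_1+\mathbf{1}_2$ must satisfy $h_1\ge k+1$ and $h_1+h_2\ge k+2$. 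These are only itself and $(k+2)\mathbf{1}_1$, which are both on path and ordered correctly by the chain above.
\end{itemize}
Comparisons between two off-path histories hold by construction of $\phi$, and comparisons where an on-path history dominates an off-path one hold because off-path beliefs are tiny. So the belief map is monotone.

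\textbf{Strict optimality.} A sender holding vector $\mathbf{h}$ can induce any $\mathbf{h'}\le\mathbf{h}$. Off-path targets are strictly worse than every on-path one, so only the on-path targets compete. These are $j\mathbf{1}_1$ for $j\le h_1$ with $j\neq k+1$, plus $(k+1)\mathbf{1}_1+\mathbf{1}_2$ when $h_1\ge k+1$ and $h_2\ge1$. By the strict chain, the unique best feasible target is the one Strategy $k$ prescribes:
\begin{itemize}
\item the full count of $s_1$ if $h_1\neq k+1$;
\item $k\mathbf{1}_1$ if $h_1=k+1$ and $h_2=0$;
\item $(k+1)\mathbf{1}_1+\mathbf{1}_2$ if $h_1=k+1$ and $h_2\ge1$.
\end{itemize}
Finally, $s_2$ is disclosed with positive probability because the event ``exactly $k+1$ copies of $s_1$ and at least one $s_2$'' has positive probability. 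The only delicate point is choosing $k$ large enough that the geometric corrections are dominated by the factor $rL_1>1$; everything else is bookkeeping.
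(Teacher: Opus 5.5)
Your proposal is correct and follows essentially the paper's route. The on-path likelihood ratios are the geometric main terms $r^{j+1}L_1^j$, perturbed at $k\mathbf{1}_1$ and $(k+1)\mathbf{1}_1+\mathbf{1}_2$ by corrections of relative order $\overline{R}^{k}$ and $\underline{R}^{k}$, and the fixed margin $rL_1>1$ (the paper's $L_1>\overline{X}/\underline{X}$ in (\ref{6.3})) delivers the strict chain for large $k$. Your explicit off-path beliefs also supply the monotonicity check that the paper leaves implicit.

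One justification needs fixing. The inequality $\pi(k\mathbf{1}_1)>\pi((k-1)\mathbf{1}_1)$ does not follow from the pooled term having a higher likelihood ratio. That term's ratio is $L_1^{k+1}(\underline{Y}/\overline{Y})^{k+2}$, which drops below $r^{k}L_1^{k-1}$ for large $k$ whenever $\underline{Y}/\overline{Y}<r$; this can happen when $L_2>1$ and $\delta$ is small. The inequality does follow from the same vanishing-weight approximation you already use for the other comparisons.
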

The proof is in Appendix \ref{secG}.

In contrast,  Statement 2 of Theorem \ref{Theorem1} which concerns strict threshold equilibria, can be extended to this one-shot disclosure game.
\begin{Proposition}\label{Prop3}
There exists $\delta^* \in (0,1)$ such that when $\delta > \delta^*$, the sender will never disclose $s_2,...,s_l$ in any strict threshold equilibrium.
\end{Proposition}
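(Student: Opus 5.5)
The plan is to reduce, as in the proof of Theorem \ref{Theorem2} and the threshold part of Theorem \ref{Theorem1}, to a single incentive comparison between a message $\mathbf{m}$ containing the worst disclosed signal and the message $\mathbf{m}-\mathbf{1}_j$. Strictness would then force the inequality
\[
L_j \cdot \frac{1-\delta (1-\sum_{i=1}^j \underline{f_i})}{1-\delta (1-\sum_{i=1}^j \overline{f_i})} > 1,
\]
and this fails for every $j\ge 2$ once $\delta$ exceeds a cutoff depending only on $\mathbf{f}$ (the limit as $\delta\to1$ is $L_j\sum_{i\le j}\underline{f_i}/\sum_{i\le j}\overline{f_i}<1$). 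Since $S$ is finite, this gives a uniform $\delta^*$.

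\textbf{Setup.} Fix a strict threshold equilibrium with threshold map $J$. Let $j$ be the largest index such that $s_j$ is disclosed with positive probability, and suppose $j\ge 2$. A sender of type $\mathbf{h}$ (his vector of signal counts) sends $\mathbf{h}[J(\mathbf{h})]$. By maximality of $j$, every on-path message $\mathbf{m}$ with $m_j\ge1$ has zero entries beyond $j$. The set of types sending such an $\mathbf{m}$ is
\[
T(\mathbf{m})=\{\mathbf{h}:\mathbf{h}[j]=\mathbf{m},\ J(\mathbf{h})\ge j\}.
\]
Such types are $\mathbf{m}$ plus an arbitrary number of signals in $\{s_{j+1},\dots,s_l\}$. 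Summing the geometric–multinomial probabilities over these bad signals, exactly as in (\ref{3.3})–(\ref{3.4}), the full set $\{\mathbf{h}:\mathbf{h}[j]=\mathbf{m}\}$ has probability
\[
(1-\delta)\,\delta^{|\mathbf{m}|}\,\frac{|\mathbf{m}|!}{\prod_i m_i!}\prod_{i\le j} f_{\theta,i}^{m_i}\,\Big(1-\delta\big(1-\textstyle\sum_{i\le j}f_{\theta,i}\big)\Big)^{-(|\mathbf{m}|+1)}.
\]
Here the exponent $|\mathbf{m}|+1$ counts the bad subsequences, as in Section \ref{sec3}. If both $T(\mathbf{m})$ and the corresponding sender set of $\mathbf{m}-\mathbf{1}_j$ were full, the ratio of odds $\pi(\mathbf{m})$ versus $\pi(\mathbf{m}-\mathbf{1}_j)$ would be exactly the left side of the displayed inequality. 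Every type in $T(\mathbf{m})$ can feasibly send $\mathbf{m}-\mathbf{1}_j$, so strictness would give the contradiction.

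\textbf{Choice of $\mathbf{m}$.} The work is in choosing $\mathbf{m}$ so that the truncation $J\ge j$ and the senders of $\mathbf{m}-\mathbf{1}_j$ do not distort this ratio in the wrong direction. I would pair each type $\mathbf{h}\in T(\mathbf{m})$ with $\mathbf{h}-\mathbf{1}_j$, which has the same bad signals. Three facts drive the argument.

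\begin{itemize}
\item Under the pairing, the ratio $P_\theta(\mathbf{h}-\mathbf{1}_j)/P_\theta(\mathbf{h})$ equals a state-independent combinatorial factor times $1/(\delta f_{\theta,j})$. After the bad signals are summed out, the bad-subsequence factor then produces the expression above.
\item If $J(\mathbf{h}-\mathbf{1}_j)<j$, then $\mathbf{h}-\mathbf{1}_j$ sends $\mathbf{h}[J']$ with $J'=J(\mathbf{h}-\mathbf{1}_j)$. That message is also feasible for $\mathbf{h}$, so by strictness $\pi(\mathbf{m})>\pi(\mathbf{h}[J'])$.
\item Conversely, if some $\mathbf{h}$ with $\mathbf{h}[j]=\mathbf{m}$ has $J(\mathbf{h})<j$, then $\mathbf{h}$ strictly prefers its message to $\mathbf{m}$.
\end{itemize}

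I would pick $\mathbf{m}$ extremal, first trying $\mathbf{m}$ maximizing $\pi$ among on-path messages with $m_j\ge1$ (after showing a maximizer exists, using that beliefs along long messages are eventually dominated). With this choice, every type that can send $\mathbf{m}$ does so, so $T(\mathbf{m})$ is full. The odds of $\mathbf{m}-\mathbf{1}_j$, or of whatever the paired types send, is then a weighted average that is at least the full-set odds of $\mathbf{m}-\mathbf{1}_j$, and the contradiction follows.

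\textbf{Main obstacle.} I expect the hardest step to be this extremal selection. In the one-shot game there is no gradualism property, so an analog of Lemma \ref{L1} is not available. Instead I must argue directly, from the threshold structure plus strictness, that for the chosen $\mathbf{m}$ the receiver's belief at $\mathbf{m}$ is no higher than the full-set computation, while some feasible alternative has belief at least the full-set value of $\mathbf{m}-\mathbf{1}_j$. If the $\pi$-maximizer approach fails, my fallback is the lexicographic minimization of the paper's $H(1)$ construction over on-path messages with $m_j\ge1$. There I would verify that the types which could send $\mathbf{m}-\mathbf{1}_j$ but deviate to lower truncations only raise $\pi(\mathbf{m}-\mathbf{1}_j)$ relative to the bound needed.
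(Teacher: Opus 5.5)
There is a genuine gap, and it is exactly at the step you flag as the main obstacle. Your reduction needs two messages whose sender sets are both the full sets $\{\mathbf h:\mathbf h[j]=\cdot\,\}$. Comparing $\mathbf m$ with $\mathbf m-\mathbf 1_j$ goes in the wrong direction, and a downward-extremal choice such as your $H(1)$-style fallback lands you precisely in one of two bad cases.

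First, if $\mathbf m-\mathbf 1_j$ is off path, Bayes' rule does not restrict $\pi(\mathbf m-\mathbf 1_j)$ at all, so comparing it with $\mathbf m$ gives no contradiction. Second, if $\mathbf m-\mathbf 1_j$ is off path or $m_j=1$, your paired types $\mathbf h-\mathbf 1_j$ send truncations $\mathbf h[J']$. The beliefs at those messages are set by pools that also contain unpaired types, for instance types concealing extra signals in $\{s_{J'+1},\dots,s_{j-1}\}$. No weighted-average inequality ties those odds to the full-set odds of $\mathbf m-\mathbf 1_j$.

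In the overlapping-generations proof this did not matter. There, the flow identity (\ref{3.9}) expresses $\mu_\theta(\mathbf{h^*})$ directly in terms of $\mu_\theta(\mathbf{h^*}-\mathbf 1_j)$, whatever the composition of the latter. The one-shot game has no such identity.

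The $\pi$-maximizer route does not rescue this. Write $\rho$ for $\frac{1-\delta(1-\sum_{i\le j}\underline{f_i})}{1-\delta(1-\sum_{i\le j}\overline{f_i})}$. Each extra $s_1$ multiplies full-set odds by $L_1\rho>1$, so long messages are not ``eventually dominated,'' and a maximizer need not exist. Moreover, maximizing over messages with $m_j\ge1$ is not what makes $T(\mathbf m)$ full. It does not stop a type from preferring a message whose $j$-th entry is zero.

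The missing idea is to move \emph{up} in the $s_j$ coordinate.
\begin{itemize}
\item Take any on-path $\mathbf m$ with $m_j\ge1$, and any type $\mathbf h$ with $\mathbf h[j]=\mathbf m+k\mathbf 1_j$, $k\ge0$. This type can feasibly send $\mathbf m$, while any threshold $J<j$ would give it $\mathbf m[J]<\mathbf m$.
\item Strictness at whichever type actually sends $\mathbf m$ gives $\pi(\mathbf m)>\pi(\mathbf x)$ for every $\mathbf x<\mathbf m$. Hence $J(\mathbf h)\ge j$, and by maximality of $j$, $\mathbf h$ sends exactly $\mathbf m+k\mathbf 1_j$.
\item So every $\mathbf m+k\mathbf 1_j$ is on path with full sender set. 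This is the paper's Lemma \ref{L3}, with $\mathbf{h^k}=\mathbf{h^*}[j]+k\mathbf 1_j$.
\end{itemize}
The odds ratio of $\mathbf{h^2}$ to $\mathbf{h^1}$ is then exactly $L_j\rho$. This is below $1$ for every $j\ge2$ once $\delta$ is large, so $\pi(\mathbf{h^2})<\pi(\mathbf{h^1})$. That contradicts the incentive of type $\mathbf{h^2}$, which can feasibly send $\mathbf{h^1}$. No extremal selection is needed, and your limiting computation then finishes the proof.
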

The proof is in Appendix \ref{secD}.

Theorem \ref{Theorem2}, which focuses on constant-threshold equilibria, also extends to this one-shot disclosure game. This is because when the sender uses a constant disclosure threshold $n$, the only relevant incentive constraint is for him to disclose signal $s_n$ whenever it is available, i.e., $\pi(\mathbf{h}[n]) \leq \pi(\mathbf{h}[n]+\mathbf{1}_n)$ for every $\mathbf{h} \in H$. This is the case if and only if inequality (\ref{3.8}) holds. The rest of argument follows from the proof of Theorem \ref{Theorem2} in Section \ref{sec3}, which is omitted.

\subsection{Extension to disclosure constraints}\label{secDiscConstraintN}
We now discuss two extensions of our baseline model where our main result Theorem \ref{Theorem1} extends.
In our baseline model, agents can disclose any number of available signals. In practice, agents may face constraints on the number of signals they can disclose. Consider an extension where each agent can disclose at most $N \in \mathbb{N}$ signals. That is, after receiving private signal $s_i$ at history $\mathbf{h}$, he can disclose $\mathbf{h'}=(h_1',...,h_l')$ if and only if $\mathbf{h'} \leq \mathbf{h} + \mathbf{1}_i$ and $\sum_{j=1}^l h_j' \leq N$.

Our proof of Theorem \ref{Theorem1} extends to any value of $N$. To see this, notice that the gradualism property, the improvement principle, and the no-turning-back property remain true regardless of $N$. Lemma \ref{L2} and Lemma \ref{L1} also extend since their proofs only need to rule out the possibility that the current-period history is $\mathbf{h^*}$ yet the previous-period history is neither $\mathbf{h^*}$ nor $\mathbf{h^*}-\mathbf{1}_j$. Such a possibility is also ruled out when the agents face an additional constraint when they decide which signals to disclose. The calculations after Lemma \ref{L2} and Lemma \ref{L1} concern the probability of history $\mathbf{h^*}$, which by construction, must satisfy our additional constraint. Equations (\ref{3.11}) and (\ref{3.9}) express the probability of $\mathbf{h^*}$ as a function of the probability of history $\mathbf{h^*}-\mathbf{1}_j$. Despite the disclosure set $D (\mathbf{h^*})$ may depend on $N$, the way we derived a contradiction only uses a particular property of $D(\mathbf{h^*})$, which is $j \geq \max D(\mathbf{h^*})$, which holds regardless of $N$.

Nevertheless, Theorem \ref{Theorem1} characterizes a common property of all strict monotone equilibria and strict threshold equilibria when $\delta$ is large. The set of strict monotone equilibria and the set of strict threshold equilibria will depend on the upper bound $N$. For example, disclosing all realized $s_1$ will not be an equilibrium since the agents' strategy will violate our new constraint.

\subsection{Extension to message-dependent communication success rate}\label{sechistorydependent}
In our baseline model, the probability that communication succeeds is independent of the set of signals an agent chooses to disclose to their successor. In practice, however, the success rate may depend on the number of the signals disclosed. For example, children may absorb information from their parents with lower probability when parents attempt to convey too much.

Suppose now that when agent $t$ discloses $\mathbf{h}$, the probability that agent $t+1$ receives $\mathbf{h}$ is $\delta (\mathbf{h}) \equiv \exp (-\Delta \cdot p (\mathbf{h}))$ where $\Delta>0$ is a parameter
and $p: H \rightarrow [\underline{p},\overline{p}]$ for some $0<\underline{p}< \overline{p}< \infty$ such that $p(\mathbf{h})$ depends on 
$\mathbf{h}$ only through its length 
and $p(\mathbf{h})$ weakly increases as the length of $\mathbf{h}$ increases.
In this setting, communication is more likely to fail when the history disclosed is longer. 
As $\Delta \rightarrow 0$, communication success rate converges to $1$.

Theorem \ref{Theorem1} extends to this more general setting. To see this, notice that the gradualism property and the no-turning-back property hold independently of the communication success rate. A modified version of the improvement principle holds in this case, that when $\mathbf{h}^{t+1}$ is disclosed when the current-period history is $\mathbf{h}^t$,  the expected value of $v(\theta,a^*(\pi))$ when the agent discloses $\mathbf{h}^{t+1}$ is greater than the expected value when he discloses $\mathbf{h}^t$. We explain why Lemma \ref{L2} extends by the end of Appendix \ref{secB}. The same logic applies to the proof of Lemma \ref{L1}, which we omit in order to avoid repetition. 

In order to show that agents will never disclose any $s_2,...,s_l$ 
 in any strict monotone equilibrium when $\Delta$ is small enough, we need to modify 
equation (\ref{3.11}) as
\begin{equation}\label{7.1}
\mu_{\theta} (\mathbf{h^*}) = \mu_{\theta}(\mathbf{h^*}-\mathbf{1}_j) \cdot \delta (\mathbf{h^*}) \cdot f_{\theta,j} \cdot 
\sum_{n=0}^{\infty} \delta(\mathbf{h^*})^n \cdot (1-\sum_{i \in D( \mathbf{h^*} )} f_{\theta,i})^n.
\end{equation}
This is because when the current-period history is $\mathbf{h^*}-\mathbf{1}_j$, the history in the next period is $\mathbf{h^*}$ if and only if the current-period private signal is $s_j$, 
the agent discloses $\mathbf{h^*}$, and communication succeeds. Conditional on state $\theta$, the above event occurs with probability 
$\delta (\mathbf{h^*}) \cdot f_{\theta,j}$. Moreover, when the current history is $\mathbf{h^*}$, the next period's history remains at $\mathbf{h^*}$ if and only if (i) the agent's private signal does not belong to $D(\mathbf{h^*})$ in which case he discloses $\mathbf{h^*}$ to the next agent
and (ii) communication succeeds. Conditional on state $\theta$, the above event occurs with probability $\delta (\mathbf{h^*}) \cdot (1-\sum_{i \in D(\mathbf{h^*})} f_{\theta,i})$. Equation (\ref{7.1}) implies that 
$\pi(\mathbf{h^*}) < \pi(\mathbf{h^*}-\mathbf{1}_j)$ if and only if 
\begin{equation}\label{7.2}
L_j \cdot \frac{1-\delta(\mathbf{h^*}) \cdot  (1-\sum_{i \in D(\mathbf{h^*})} \underline{f_i})}{1-\delta(\mathbf{h^*}) \cdot (1-\sum_{i \in D(\mathbf{h^*})} \overline{f_i})} <1.
\end{equation}
Since $\delta (\mathbf{h^*})= \exp (-\Delta \cdot p (\mathbf{h^*}))$ and the function $p(\cdot)$ is uniformly bounded from above and from below, we know that there exists $\Delta^* >0$ such that when $\Delta < \Delta^*$, inequality (\ref{7.2}) holds for all 
$j$ and $D(\mathbf{h^*})$ that satisfy
$j \geq 2$ and $j \geq \max D(\mathbf{h^*})$. 
After receiving private signal $s_j$ at history $\mathbf{h^*}-\mathbf{1}_j$, 
(i) by disclosing $\mathbf{h^*}$, an agent induces public belief $\pi(\mathbf{h^*})$ with probability $\delta (\mathbf{h^*})$ and induces public belief $\pi(\mathbf{0})$ with probability $1-\delta (\mathbf{h^*})$; (ii)
by disclosing $\mathbf{h^*}-\mathbf{1}_j$, an agent induces public belief $\pi(\mathbf{h^*}-\mathbf{1}_j)$ with probability $\delta (\mathbf{h^*}-\mathbf{1}_j)$ and induces public belief $\pi(\mathbf{0})$ with probability $1-\delta (\mathbf{h^*}-\mathbf{1}_j)$. 
Disclosing $\mathbf{h^*}-\mathbf{1}_j$
leads to a strictly higher expected payoff than disclosing $\mathbf{h^*}$ 
when $\Delta$ is small enough since 
$\pi(\mathbf{h^*}) < \pi(\mathbf{h^*}-\mathbf{1}_j)$,
$\delta(\mathbf{h^*}) \leq \delta(\mathbf{h^*} -\mathbf{1}_j)$, and from the improvement principle
$\pi(\mathbf{0}) \leq \min \{ \pi(\mathbf{h^*}), \pi(\mathbf{h^*}-\mathbf{1}_j) \}$. This leads to a contradiction, which rules out strict monotone equilibria where agents disclose signals other than $s_1$.
The same reasoning can be applied to show that 
agents will never disclose $s_2,...,s_l$
in any strict threshold equilibrium when $\Delta$ is small enough, by replacing 
$\sum_{i \in D( \mathbf{h^*} )} f_{\theta,i}$ in (\ref{7.1}) and (\ref{7.2}) with $\sum_{i=1}^j f_{\theta,i}$.

\section{Conclusion}\label{sec7}
This paper studies the intergenerational transmission of verifiable information in an overlapping generations model that incorporates both physical communication frictions as well as agents' strategic incentives to conceal unfavorable information. We show that lowering the communication frictions will make the agents more selective in disclosing information. This is because disclosing an additional good signal has two effects on the next agent's belief: a direct effect which shows the existence of another good signal as well as an indirect effect which shows the existence of another subsequence of bad signals. As communication frictions vanish, the expected length of each bad subsequence increases conditional on each state. 
Hence, the indirect effect will become stronger, so that fewer signals can have high enough likelihood ratios to overcome the more pronounced indirect effect. The indirect effect is absent in \citet{dye1985disclosure}'s model since any disclosure proves to the receiver that no information is hidden. As a result, 
an increase in the probability that the sender has information makes his disclosure less selective.

We conclude by discussing several limitations of our analysis. First, as in the classic overlapping generations model of \citet{samuelson1958exact}, followed by \citet{bhaskar1998informational}, each agent's payoff depends only on his action and his immediate successor's action,\footnote{Allowing agents' payoffs to depend also on their predecessors' actions, as in \citet{acemoglu2015history}, will not change our results since even with such a dependence, each agent's objective is still to maximize their immediate successor's belief about the state when they choose the set of signals to disclose.} but not on future agents' actions. Suppose each agent $t$'s payoff depends not only on agent $t+1$'s action but also on the action of agent $t+2$'s. When agent $t$ decides which set of signals to disclose, maximizing agent $t+1$'s belief is no longer sufficient since the signals he disclosed also affects agent $t+1$'s ability to influence agent $t+2$'s belief, which in turn affects agent $t$'s payoff. Moreover, aside from the set of signals available to agent $t$, agent $t$'s 
posterior belief about $\theta$ also affects his incentives to disclose signals since it affects his predictions regarding the realized private signal of agent $t+1$, which together with the signals disclosed by agent $t$, affects agent $t+1$'s disclosure behavior. Solving this dynamic optimization problem is challenging, which we left for future research.

Second, we assume that agent $t$'s payoff takes the form of $u(\theta,a^t) + v(\theta,a^{t+1})$, that is, agent $t$'s marginal utility from his own action is independent of agent $t+1$'s action. If agent $t$'s payoff takes the general form $w(\theta,a^t,a^{t+1})$, then as in the previous case, agent $t$ also needs to consider agent $t+1$'s ability to influence agent $t+2$'s belief since agent $t+1$'s incentive to take actions depends not only on his belief about $\theta$ but also on his belief about agent $t+2$'s action.  The same challenge arises when solving this dynamic optimization problem. 

Third, as in \citet*{clark2021record}, our main result, Theorem \ref{Theorem1}, restricts attention to \textit{strict} steady state equilibria.\footnote{Our Theorem \ref{Theorem2}, which characterizes all constant-threshold equilibria, do not require the strictness of players' incentives, as indicated in the statement of that result.} We do not know whether Theorem \ref{Theorem1} holds when we weaken our solution concept
from strict monotone equilibria and strict threshold equilibria to pure-strategy monotone equilibria and pure-strategy threshold equilibria. One challenge is that no-turning-back property breaks down when we do not require players to have strict incentives, in which case Lemma \ref{L2} as well as the calculations after Lemma \ref{L2} and Lemma \ref{L1} no longer apply. For example, in threshold equilibria, once the history reaches $\mathbf{h^*}$ defined in Section \ref{sub3.2}, agents may have an incentive to conceal signal $s_j$ upon receiving signal $s_i$ with $i>j$ if 
$\pi(\mathbf{h^*}[j])= \pi( \mathbf{h^*}[k] )$ for some $k<j$. In equilibrium, this may in turn boost the next agent's belief upon observing history $\mathbf{h^*}$ as leaving history $\mathbf{h^*}$ following bad signals means less adverse selection.

Nevertheless, in the extension we considered where each agent can disclose up to $N$ signals, under generic parameter values $(u,v,\mathbf{f},\pi_0,\delta)$, all pure-strategy equilibria are strict. This is because when $N$ is finite, there is only a finite number of signal profiles that can be disclosed. If two signal profiles lead to the same expected payoff for the sender, then it leads to a polynomial of $\delta$. Since there is a finite number of signal profiles, there can be at most a finite number of $\delta$ for every $(u,v,\mathbf{f},\pi_0)$ under which the sender is indifferent. 

\newpage
\appendix
\section{Omitted Proofs}
\subsection{Proof of Lemma \ref{L0}}\label{secE}
Fix any $\theta$ and pure strategy $\sigma$,\footnote{The  proof applies to mixed strategies by defining $P^\sigma_\theta(\mathbf{h},\mathbf{h}') =
(1-\delta)\mathds{1}_{\mathbf{h}'=\mathbf{0}}
+ \delta\sum_{i=1}^l f_{\theta,i}\sigma_D(\mathbf{h},s_i)(\mathbf{h'})$.} and recall that $\sigma_D: H \times S \rightarrow H$ stands for the component of $\sigma$ that maps an agent's history and private signal to the set of signals disclosed to the next agent. 
For any $\mathbf{h},\mathbf{h}'\in H$, define the probability of moving from history $\mathbf{h}$ to history $\mathbf{h'}$ as
\begin{equation*}
P^\sigma_\theta(\mathbf{h},\mathbf{h}') =
(1-\delta)\mathds{1}_{\mathbf{h}'=\mathbf{0}}
+ \delta\sum_{i=1}^l f_{\theta,i}\mathds{1}_{\sigma_D(\mathbf{h},s_i)=\mathbf{h'}}.
\end{equation*}

The above equation defines a stochastic process on histories with transition probability function $P^\sigma_\theta$. The process is Markov, since each player's strategy can only depend on their disclosed history and signal, which is i.i.d. over time conditional on the state. Conditional on the selected history being successfully observed by the next generation, we can also define a transition probability $Q^\sigma_\theta(\mathbf{h},\mathbf{h'}) =
\sum_{i=1}^l f_{\theta,i}\mathds{1}_{\sigma_D(\mathbf{h},s_i)=\mathbf{h'}}$ and observe that it also defines a Markov process. Thus, we can write 
\begin{equation} \label{eqLem1transition}
    P^\sigma_\theta(\mathbf{h},\mathbf{h}') =
(1-\delta)\mathds{1}_{\mathbf{h}'=\mathbf{0}}
+ \delta Q^\sigma_\theta(\mathbf{h},\mathbf{h}').
\end{equation}

We show existence and uniqueness of a steady state distribution using the Banach fixed point theorem. Let $T^\sigma_\theta:\Delta(H) \rightarrow \Delta(H)$ be the transition operator implied by $P^\sigma_\theta$, so that $(T^\sigma_\theta(\mu))(\mathbf{h'}) = \sum_{\mathbf{h}\in H}\mu(\mathbf{h}) P^\sigma_\theta (\mathbf{h},\mathbf{h'})$. We claim that this is a contraction mapping when equipped with the total variation distance. 

Note that applying only the $Q^\sigma_\theta$ transition does not increase total variation distance. This follows because $Q^\sigma_\theta$ is a Markov transition probability. Formally, for any $\mu,\mu'\in\Delta(H)$,
\begin{align*}
\Big\|\sum_{\mathbf{h}}\mu(\mathbf{h})Q^\sigma_\theta(\mathbf{h},\cdot)-\sum_{\mathbf{h}}\mu'(\mathbf{h})Q^\sigma_\theta(\mathbf{h},\cdot)\Big\|_{\mathrm{TV}}
&=
\frac12\sum_{\mathbf{h}'\in H}\Big|\sum_{\mathbf{h}\in H}\big(\mu(\mathbf{h})-\mu'(\mathbf{h})\big)Q^\sigma_\theta(\mathbf{h},\mathbf{h}')\Big|\\
&\le
\frac12\sum_{\mathbf{h}'\in H}\sum_{\mathbf{h}\in H}\big|\mu(\mathbf{h})-\mu'(\mathbf{h})\big|\,Q^\sigma_\theta(\mathbf{h},\mathbf{h}')\\
&=
\frac12\sum_{\mathbf{h}\in H}\big|\mu(\mathbf{h})-\mu'(\mathbf{h})\big|\sum_{\mathbf{h}'\in H}Q^\sigma_\theta(\mathbf{h},\mathbf{h}')\\
&=
\frac12\sum_{\mathbf{h}\in H}\big|\mu(\mathbf{h})-\mu'(\mathbf{h})\big|\\
&=
\|\mu-\mu'\|_{\mathrm{TV}},
\end{align*}
where we used that $\sum_{\mathbf{h}'\in H}Q^\sigma_\theta(\mathbf{h},\mathbf{h}')=1$ for each $\mathbf{h}$.

By (\ref{eqLem1transition}) and the above inequality, we have that for any $\mu,\mu'\in\Delta(H)$,
\begin{equation*}
\|T^\sigma_\theta(\mu)-T^\sigma_\theta(\mu')\|_{\mathrm{TV}}
\le
\delta\,\|\mu-\mu'\|_{\mathrm{TV}}.
\end{equation*}
Since $\delta\in(0,1)$, $T^\sigma_\theta$ is a contraction on $(\Delta(H),\|\cdot\|_{\mathrm{TV}})$. By the Banach fixed point theorem, there exists a unique steady state distribution over histories $\mu_\theta\in\Delta(H)$.

\subsection{Proof of Proposition \ref{Prop1}}\label{secC}
Suppose agents disclose no signal other than $s_i$ and disclose up to $n$ realizations of $s_i$. Then conditional on state $\theta$, for every $m<n$, the probability of history $m \mathbf{1}_i$ is 
\begin{equation}\label{C.1}
(1-\delta) \delta^{m}  \cdot (1-\delta (1-f_{\theta,i}))^{-(m+1)} \cdot  f_{\theta,i}^{m}.
\end{equation}
Therefore, 
\begin{equation}\label{C.2}
\frac{\pi(m \mathbf{1}_i)}{1-\pi(m \mathbf{1}_i)}=
\frac{\pi_0}{1-\pi_0} \cdot
\Big(
\frac{1-\delta (1-\underline{f}_i) }{1-\delta (1-\overline{f}_i)}
\Big)^{m+1} \cdot 
L_i^m.
\end{equation}
If $L_i >1$, then the value of (\ref{C.2}) is strictly increasing in $m$ regardless of $\delta$. 
The probability of history $n \mathbf{1}_i$ is
\begin{equation}\label{C.3}
\sum_{k=n}^{\infty} (1-\delta) \delta^{k}  \cdot (1-\delta (1-f_{\theta,i}))^{-(k+1)} \cdot  f_{\theta,i}^{k}, 
\end{equation}
so 
\begin{align}\label{C.4}
\frac{\pi(n \mathbf{1}_i)}{1-\pi(n \mathbf{1}_i)}
&\ge 
\frac{\pi_0}{1-\pi_0} \cdot 
\min_{k \geq n} \Big\{
\frac{ (1-\delta)\,\delta^{k}
       \big(1-\delta (1-\overline{f}_i)\big)^{-(k+1)}
       \overline{f}_i^{\,k}
     }
     { (1-\delta)\,\delta^{k}
       \big(1-\delta (1-\underline{f}_i)\big)^{-(k+1)}
       \underline{f}_i^{\,k}
     } \Big\}
\notag \\[0.4em]
&\ge 
\frac{\pi_0}{1-\pi_0} \cdot 
\frac{ (1-\delta)\,\delta^{n}
       \big(1-\delta (1-\overline{f}_i)\big)^{-(n+1)}
       \overline{f}_i^{\,n}
     }
     { (1-\delta)\,\delta^{n}
       \big(1-\delta (1-\underline{f}_i)\big)^{-(n+1)}
       \underline{f}_i^{\,n}
     }
\notag \\[0.6em]
&=
\frac{\pi_0}{1-\pi_0} \cdot
\left(
\frac{1-\delta (1-\underline{f}_i)}
     {1-\delta (1-\overline{f}_i)}
\right)^{\!n+1}
L_i^{\,n}.
\end{align}
The right-hand-side of (\ref{C.4}) is strictly greater than the value of (\ref{C.2}) for any $m<n$. 
These calculations verified that regardless of $\delta \in (0,1)$, there exists an equilibrium where agents only disclose up to $n$ realizations of $s_i$. The resulting belief map $\pi: H \rightarrow [0,1]$ can be constructed so that $\pi(\mathbf{h})< \pi(\mathbf{0})$ for every $\mathbf{h} \in H$ that occurs with zero probability under the agents' equilibrium strategy.

\subsection{Proof of Proposition \ref{thm:MIcont}}\label{secF}
    For any $n \geq 2$, fix $\delta'<\delta$ so that $\delta', \delta \in (\delta_{n+1},\delta_{n}]$. We first show that $I(\theta;\mathbf{h}_{\delta'}) \leq I(\theta;\mathbf{h}_\delta)$. Observe that both $\mathbf{h}_\delta$ and $\mathbf{h}_{\delta'}$ involve disclosing signals $\{s_1,...,s_n\}$. 
    We refer to the sequence of signals after the last communication failure (including the ones concealed) as the \textit{complete history} of game. Since the expected length of the complete history in the steady state equals $\delta/(1-\delta)$, the expected number of signals disclosed is increasing in $\delta$. Hence, if $\delta'<\delta$, then the random variable $\mathbf{h}_{\delta'}$ can be represented as a garbling of the random variable $\mathbf{h}_\delta$, which implies that $I(\theta;\mathbf{h}_{\delta'}) \leq I(\theta;\mathbf{h}_\delta)$. Observe also that for such $\delta$, $I(\theta;\mathbf{h}_\delta)$ varies continuously since $\delta/(1-\delta)$ is continuous in $\delta$. 

    Fix any cutoff $\delta_n$. By definition, signals $\{s_1,...,s_n\}$ are disclosed in the most informative constant-threshold equilibrium when $\delta$ is slightly below $\delta_n$ and signals $\{s_1,...,s_{n-1}\}$ are disclosed in the most informative constant-threshold equilibrium when $\delta$ is slightly above $\delta_n$
    At the cutoff $\delta_n$, the incentive constraint to disclose signal $n$ is binding, so that by equation (\ref{3.8}) we have
\begin{equation*}
    L_n \cdot \frac{1-\delta_n (1-\underline{f_1}-...-\underline{f_n})}{1-\delta_n (1-\overline{f_1}-...-\overline{f_n})} = 1.
\end{equation*}
    But because of the binding constraint, disclosing an additional realization of $s_n$ does not change the likelihood ratio of the public belief. Hence, when $\delta=\delta_n$, we have $\pi(h_1,...,h_{n-1},0,...,0)=\pi(h_1,...,h_{n-1},h_{n},0,...,0)$ for any $h_n \in \mathbb{N}$ and $(h_1,...,h_{n-1}) \in \mathbb{N}^{n-1}$. Therefore, we have $I(\theta;\mathbf{h}^{(n)}_{\delta_n})=I(\theta;\mathbf{h}^{(n-1)}_{\delta_n})=I(\theta;\mathbf{h}_{\delta_n})$ and $I(\theta;\mathbf{h}_\delta)$ is continuous at $\delta=\delta_n$.

\subsection{Proof of Lemma \ref{L2}}\label{secB}
Suppose by way of contradiction that the realized history in period $t-1$ was $\mathbf{\hat{h}} \equiv (\hat{h}_1,...,\hat{h}_l)$, which is neither  $\mathbf{h^*}$ nor 
$\mathbf{h^*}-\mathbf{1}_j$. We obtain a contradiction in three cases:
\begin{enumerate}
    \item If $\hat{h}_j=0$, then $h_j^*-\hat{h}_j=1$, so 
    the gradualism property implies that $\sum_{i=1}^{j-1} \hat{h}_i \geq \sum_{i=1}^{j-1} h_i^*$. Since $\mathbf{\hat{h}} \neq \mathbf{h^*}-\mathbf{1}_j$, we have $(\hat{h}_1,...,\hat{h}_{j-1}) \neq (h_1^*,...,h_{j-1}^*)$. Hence, there exists $1 \leq i \leq j-1$ such that $\hat{h}_i > h_i^*$. This implies that after observing signal $s_j$ at history $\mathbf{\hat{h}}$, the agent discloses $s_j$ yet conceals at least one realization of $s_i$. When the belief map $\pi$ is monotone, disclosing $\mathbf{h^*}$ leads to a strictly lower belief relative to disclosing $\mathbf{h^*}-\mathbf{1}_j +\mathbf{1}_i$. The latter is feasible by construction. This contradicts the agent's incentive constraint. 
    \item If $\hat{h}_j=1$, then the construction of $\mathbf{h^*}$ implies that $\sum_{i=1}^{j-1} \hat{h}_i \geq \sum_{i=1}^{j-1} h_i^*$. Since $\mathbf{\hat{h}} \neq \mathbf{h^*}$, we have $(\hat{h}_1,...,\hat{h}_{j-1}) \neq (h_1^*,...,h_{j-1}^*)$, so there exists $1 \leq i \leq j-1$ such that $\hat{h}_i > h_i^*$. 
    Again, the monotonicity of the belief map $\pi$ implies that disclosing $\mathbf{h^*}$ leads to a strictly lower belief relative to disclosing $\mathbf{h^*}-\mathbf{1}_j +\mathbf{1}_i$. By construction, vector $\mathbf{h^*}-\mathbf{1}_j +\mathbf{1}_i$ is also feasible for the agent, so disclosing $\mathbf{h^*}$ violates his incentive constraint. 
    \item If $\hat{h}_j \geq 2$, then consider the sequence of agents' realized histories starting from the latest communication failure, which we normalize as period $0$. Let $\mathbf{h}(\tau)$ denote the realized history in period $\tau$. By definition, $\mathbf{h}(0)=(0,0,...,0)$, $\mathbf{h}(t-1)=\mathbf{\hat{h}}$, and $\mathbf{h}(t)=\mathbf{h^*}$. The gradualism property implies the existence of $\tau<t-1$ such that the $j$th entry of $\mathbf{h}(\tau)$ is $1$. Let $\mathbf{h}(\tau) \equiv \mathbf{h^{**}} \equiv (h^{**}_1,...,h^{**}_{j-1},1,0,...,0)$. The definition of vector $\mathbf{h^*}$ implies that at least one of the following two cases is true. The first case is $\sum_{i=1}^{j-1} h^{**}_i = \sum_{i=1}^{j-1} h_i^*$, which by the construction of history $\mathbf{h^*}$, implies that 
    \begin{equation*}
    \sum_{i=1}^k h^{**}_i \geq \sum_{i=1}^k h^*_i \textrm{ for every } i \in \{1,2,...,j-1\}.
\end{equation*}
The second case is  $\sum_{i=1}^{j-1} h^{**}_i > \sum_{i=1}^{j-1} h^*_i$, in which case there exists $\mathbf{h} \equiv (h_1,...,h_{j-1},1,0,...,0)$ that is weakly less than $\mathbf{h^{**}}$ such that 
    \begin{equation*}
    \sum_{i=1}^k h_i \geq \sum_{i=1}^k h_i^* \textrm{ for every } i \in \{1,2,...,j-1\}.
\end{equation*}
The monotonicity of the belief map $\pi$ then implies the existence of $\mathbf{h} \leq \mathbf{h^{**}}$ 
such that $\mathbf{h}$ and $\mathbf{h^*}$ share the same length and moreover,
$\pi(\mathbf{h}) \geq \pi(\mathbf{h^{*}})$. Since history $\mathbf{h^{**}}$ occurs with positive probability in equilibrium, we have $\pi(\mathbf{h^{**}}) \geq \pi(\mathbf{h})$ for every $\mathbf{h} \leq \mathbf{h^{**}}$. Therefore, $\pi(\mathbf{h^{**}}) \geq \pi(\mathbf{h^*})$. However $\mathbf{h}(t)=\mathbf{h^*}$ and $\mathbf{h}(\tau)=\mathbf{h^{**}}$ with $t> \tau$ along a realized path of histories, and a history in between them $\mathbf{h}(t-1)=\mathbf{\hat{h}} \neq \mathbf{h^{**}}$. The conclusion that 
 $\pi(\mathbf{h^{**}}) \geq \pi(\mathbf{h^*})$ contradicts the improvement principle we established earlier, which requires that  $\pi(\mathbf{h^{**}}) < \pi(\mathbf{h^*})$. 
\end{enumerate} 

\paragraph{Remark:} The proof of Lemma \ref{L2} extends to the model discussed in Section \ref{sechistorydependent} where communication success rate weakly decreases as the disclosed history gets longer. To see this, let us revisit the three cases considered in the proof. 
The first two cases only involve comparison between beliefs at two histories of the same length, namely, $\mathbf{h^*}$ and $\mathbf{h^*}-\mathbf{1}_j +\mathbf{1}_i$. Since the communication success rate depends only on the length of the disclosed history, an agent prefers to disclose 
$\mathbf{h^*}$ to $\mathbf{h^*}-\mathbf{1}_j +\mathbf{1}_i$
if and only if $\pi(\mathbf{h^*}) \geq \pi(\mathbf{h^*}-\mathbf{1}_j +\mathbf{1}_i)$. In the third case, recall the construction of history $\mathbf{h^{**}}$ as well as the existence of history $\mathbf{h}$ such that 
$\mathbf{h} \leq \mathbf{h^{**}}$, $\mathbf{h}$ and $\mathbf{h^*}$ share the same length, and 
$\pi(\mathbf{h}) \geq \pi(\mathbf{h^{*}})$. Therefore, the expected payoff from disclosing $\mathbf{h}$ is weakly greater than the expected payoff from disclosing $\mathbf{h^*}$, as the communication success rates are the same yet disclosing $\mathbf{h}$ leads to a weakly higher belief. This leads to a contradiction since disclosing $\mathbf{h}$ is available when the history is $\mathbf{h^{**}}$ yet history $\mathbf{h^*}$ is reached after $\mathbf{h^{**}}$ before the next time communication breaks down. 

\subsection{Proof of Lemma \ref{L1}}\label{secA}
Let $\mathbf{h'} \equiv (h_1',...,h_l')$ denote his immediate predecessor's history and suppose by way of contradiction that $\mathbf{h'} \notin \{ \mathbf{h^*}
, \mathbf{h^*} -\mathbf{1}_j \}$. We consider three cases separately:
\begin{enumerate}
    \item If $h_j'=0$, then in order for history to evolve from $\mathbf{h'}$ to $\mathbf{h^*}$ in one period, 
    the current-period private signal must be $s_j$ and $s_j$ needs to be disclosed after receiving it at history $\mathbf{h'}$. Since $\sigma$ is a threshold strategy, the fact that $s_j$ is disclosed 
    after receiving private signal $s_j$ at history
    $\mathbf{h'}$ implies that existing signals $s_1,...,s_{j-1}$ are all disclosed. This implies that $\mathbf{h'}=(h_1^*,...,h_{j-1}^*,0,...,0)$, which contradicts our hypothesis that $\mathbf{h'} \neq \mathbf{h^*}-\mathbf{1}_j$. 
    \item If $h_j'=1$, then the construction of $\mathbf{h^*}$ requires that $\sum_{i=1}^{j-1} h_i' \geq \sum_{i=1}^{j-1} h_i^*$ and our hypothesis that $\mathbf{h'} \neq \mathbf{h^*}$ implies that $(h_1',...,h_{j-1}') \neq (h_1^*,...,h_{j-1}^*)$. The two together imply that there exists $i \in \{1,2,...,j-1\}$ such that $h_i' > h_i^*$. This implies that after receiving some signal at history $\mathbf{h'}$, an agent will disclose signal $s_j$ while concealing at least one realized $s_i$ for some $i<j$. This contradicts the hypothesis that agents use a threshold strategy. 
    \item If $h_j' \geq 2$, then after receiving some signal at $\mathbf{h'}$, agents will conceal at least one realized $s_j$ while reveal at least one $s_j$. This contradicts the hypothesis that $\sigma$ is a threshold strategy. 
\end{enumerate}

\subsection{Proof of Proposition \ref{Prop4}}\label{secG}
If the sender uses Strategy $k$, then the receiver will only observe 
$\mathbf{h} \equiv (k+1) \mathbf{1}_1 + \mathbf{1}_2$ and 
$j\mathbf{1}_1$ for every $j \neq k+1$ with positive probability. The receiver's posterior beliefs,
given by the mapping $\pi: H \rightarrow [0,1]$, are pinned down by Bayes rule following these positive probability events. 
Therefore, in order to verify that Strategy $k$ is part of a strict equilibrium, we only need to verify that the following three conditions are satisfied:
\begin{enumerate}
    \item $\pi(j \mathbf{1}_1)$ is strictly increasing in $j$ for every $j \neq k+1$.
    \item $\pi((k+2)\mathbf{1}_1)>\pi( \mathbf{h} )$.
    \item $\pi( \mathbf{h} ) > \pi(k \mathbf{1}_1)$.
\end{enumerate}
The first two conditions are satisfied for all $k \in \mathbb{N}$, as
implied by expressions (\ref{3.2}) to (\ref{3.5})
 in the proof of Theorem \ref{Theorem2}. In what follows, we show that for every $\delta \in (0,1)$, there exists a large enough $k$ such that the third condition is also satisfied. Applying (\ref{3.2}) and (\ref{3.4}), we know that if the sender uses Strategy $k$ in the one-shot game, then conditional on the state being $\theta$, the receiver will observe $k \mathbf{1}_1$ with probability
\begin{equation*}
(1-\delta) \delta^k \Big(1-\delta (1-f_{\theta,1})\Big)^{-(k+1)} f_{\theta,1}^k
+ (1-\delta) \delta^{k+1} \Big(
1-\delta (1-f_{\theta,1}-f_{\theta,2})
\Big)^{-(k+2)} f_{\theta,1}^{k+1},
\end{equation*}
where the first term is the probability that the sender's observes $k$ signal $s_1$ and an unknown number of signals from $s_2$ to $s_l$, and the second term is the probability that the sender observes $k+1$ signal $s_1$, no signal $s_2$, as well as an unknown number of signals from $s_3$ to $s_l$. Under the same sender-strategy and conditional on state $\theta$,
the receiver will observe $\mathbf{h}$ with probability
\begin{equation*}
(1-\delta) \delta^{k+1} \Big(1-\delta (1-f_{\theta,1})\Big)^{-(k+2)} f_{\theta,1}^{k+1}
- (1-\delta) \delta^{k+1} \Big(
1-\delta (1-f_{\theta,1}-f_{\theta,2})
\Big)^{-(k+2)} f_{\theta,1}^{k+1},
\end{equation*}
where the first term is the probability that the sender observes $k+1$ signal $s_1$ and an unknown number of signals from $s_2$ to $s_l$ and the second term coincides with the second term in the previous expression except with the opposite sign.
According to Bayes rule, we know that $\pi( \mathbf{h} ) > \pi(k \mathbf{1}_1)$ if and only if the value of 
\begin{equation}\label{6.1}
\frac{\Big(1-\delta (1-f_{\theta,1})\Big)^{-(k+1)} + \delta f_{\theta,1} \Big(
1-\delta (1-f_{\theta,1}-f_{\theta,2})
\Big)^{-(k+2)}}
{
f_{\theta,1} \Big(1-\delta (1-f_{\theta,1})\Big)^{-(k+2)} - f_{\theta,1} \Big(
1-\delta (1-f_{\theta,1}-f_{\theta,2}) \Big)^{-(k+2)}
}
\end{equation}
strictly decreases once $\theta$ increases from $\underline{\theta}$ to $\overline{\theta}$. This is the case if and only if 
\begin{equation}\label{6.2}
\frac{ \Big( \overline{X}^{-(k+1)} + \delta \overline{f_1} \cdot \overline{Y}^{-(k+2)} \Big) \cdot \Big(\underline{X}^{-(k+2)} - \underline{Y}^{-(k+2)} \Big) }{\Big( \underline{X}^{-(k+1)} + \delta \underline{f_1} \cdot \underline{Y}^{-(k+2)} \Big) \cdot \Big( \overline{X}^{-(k+2)} - \overline{Y}^{-(k+2)} \Big)} < \frac{\overline{f_1}}{\underline{f_1}}=L_1,
\end{equation}
where
\begin{equation*}
\overline{X} \equiv  1-\delta (1-\overline{f_1}), \quad  \underline{X} \equiv  1-\delta (1-\underline{f_1}) 
\end{equation*}
\begin{equation*}
\overline{Y} \equiv  1-\delta (1-\overline{f_1}-\overline{f_2}), \textrm{ and } \underline{Y} \equiv  1-\delta (1-\underline{f_1}-\underline{f_2}).  
\end{equation*}
Let 
\begin{equation*}
\overline{R} \equiv \overline{X}/\overline{Y} \textrm{ and } \underline{R} \equiv \underline{X}/\underline{Y}.
\end{equation*}
Inequality (\ref{6.2}) can then be rewritten as
\begin{equation}\label{6.3}
\frac{ ( \overline{X} + \delta \overline{f_1} \cdot \overline{R}^{k+2} ) (1-\underline{R}^{k+2}) }{  ( \underline{X} + \delta \underline{f_1} \cdot \underline{R}^{k+2} ) (1-\overline{R}^{k+2}) } < L_1
\end{equation}
For every $\delta \in (0,1)$, we know that $\overline{X}<\overline{Y}$ and $\underline{X} < \underline{Y}$, which implies that $0< \underline{R} < \overline{R} <1$. Therefore, fix any $\delta \in (0,1)$, both $\overline{R}^{k+2}$ and $\underline{R}^{k+2}$ converge to $0$ as $k \rightarrow +\infty$.
This suggests that left-hand-side of (\ref{6.3}) converges to 
$\overline{X}/\underline{X}$ as $k \rightarrow +\infty$. Since $L_1 > \overline{X}/\underline{X}$ for any $\delta \in (0,1)$, we know that for any $\delta \in (0,1)$, there exists a large enough $k$ under which inequality (\ref{6.3}) is satisfied. This verifies that no matter how close $\delta$ is to $1$,
there exists a strict monotone equilibrium  under which signal $s_2$ is disclosed with positive probability.

\subsection{Proof of Proposition \ref{Prop3}}\label{secD}
Suppose by way of contradiction that for every $\delta$ close enough to $1$, there exists a strict equilibrium in the one-shot disclosure game where the sender uses a threshold strategy $\sigma_D$ under which there exists $j \geq 2$ such that 
the $j$th entry of
$\sigma_D(\mathbf{h})$
is strictly positive for some  $\mathbf{h} \in H$.

Fix any such equilibrium and without loss of generality, let $j \in \{2,...,l\}$ denote the largest integer such that there exists $\mathbf{h}$ such that the $j$th entry of $\sigma_D(\mathbf{h})$ is not zero. Let 
$\mathbf{h^*} \equiv (h_1^*,...,h_l^*) \in H$ denote a history at which $s_j$ is disclosed. 
Since $\sigma_D$ is a threshold strategy, we have $\sigma_D(\mathbf{h^*})=\mathbf{h^*}[j]$.
For every $k \in \mathbb{N}$, let $\mathbf{h^k} \equiv \mathbf{h^*}[j] + k \mathbf{1}_j$.  
\begin{Lemma}\label{L3}
For every $\mathbf{h} \in H $ such that there exists an integer $k \geq 1$ that satisfies
$\mathbf{h}[j]= \mathbf{h^k}$,
we have $\sigma_D(\mathbf{h})=\mathbf{h^k}[j]$. 
\end{Lemma}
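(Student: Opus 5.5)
The plan is to argue by induction on $k \ge 1$, carrying along the auxiliary claim that $\mathbf{h^{k-1}}$ is disclosed with positive probability and that
\begin{equation*}
\pi(\mathbf{h^{k-1}}) > \pi(\mathbf{h^*}[i]) \quad \textrm{for every } i \in \{0,1,\dots,j-1\}.
\end{equation*}
Two structural facts will be used throughout.
\begin{enumerate}
\item \textbf{Restricted menu.} Because $\sigma_D$ is a threshold strategy, $\sigma_D(\mathbf{h})=\mathbf{h}[J(\mathbf{h})]$. By the maximality of $j$, no entry with index above $j$ is ever disclosed. So if $J(\mathbf{h})>j$, then $\mathbf{h}[J(\mathbf{h})]=\mathbf{h}[j]$.
\item \textbf{Candidate disclosures.} Hence, for any $\mathbf{h}$ with $\mathbf{h}[j]=\mathbf{h^k}$, the disclosure $\sigma_D(\mathbf{h})$ lies in
\begin{equation*}
\{\mathbf{h^k}[0],\dots,\mathbf{h^k}[j-1],\mathbf{h^k}[j]\}=\{\mathbf{h^*}[0],\dots,\mathbf{h^*}[j-1],\mathbf{h^k}\}.
\end{equation*}
This holds because $\mathbf{h^k}$ and $\mathbf{h^*}[j]$ agree in the first $j-1$ coordinates. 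Consequently it suffices to show that $\mathbf{h^k}$ yields a strictly higher belief than every $\mathbf{h^*}[i]$ with $i<j$.
\end{enumerate}

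For the base of the auxiliary claim ($k-1=0$), note $\mathbf{h^0}=\mathbf{h^*}[j]=\sigma_D(\mathbf{h^*})$. This history is observed with positive probability, since a sender holding exactly $\mathbf{h^*}$ occurs with positive probability under the geometric distribution. Each $\mathbf{h^*}[i]$ with $i<j$ is feasible at $\mathbf{h^*}$ and differs from $\mathbf{h^*}[j]$, because $h^*_j \ge 1$ (signal $s_j$ is disclosed at $\mathbf{h^*}$). Strictness of the equilibrium therefore gives $\pi(\mathbf{h^0})>\pi(\mathbf{h^*}[i])$ for all $i<j$.

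For the inductive step, take any $\mathbf{h}$ with $\mathbf{h}[j]=\mathbf{h^k}$, $k\ge 1$.
\begin{enumerate}
\item The vector $\mathbf{h^{k-1}}$ satisfies $\mathbf{h^{k-1}} \le \mathbf{h}$, so disclosing it is a feasible deviation at $\mathbf{h}$.
\item By the inductive hypothesis it induces belief $\pi(\mathbf{h^{k-1}})>\pi(\mathbf{h^*}[i])$ for every $i<j$.
\item Hence the equilibrium disclosure at $\mathbf{h}$ must induce a belief at least $\pi(\mathbf{h^{k-1}})$. This rules out every $\mathbf{h^*}[i]$ with $i<j$, leaving $\sigma_D(\mathbf{h})=\mathbf{h^k}=\mathbf{h^k}[j]$, which is the claim of Lemma~\ref{L3}.
\end{enumerate}

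To propagate the auxiliary claim to $k$, observe two things. First, $\mathbf{h^k}$ is now on path, since a sender holding exactly $\mathbf{h^k}$ has positive probability and discloses it. Second, at such a sender $\mathbf{h^{k-1}}\neq\mathbf{h^k}$ is a feasible alternative, so strictness gives $\pi(\mathbf{h^k})>\pi(\mathbf{h^{k-1}})>\pi(\mathbf{h^*}[i])$ for all $i<j$.

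The step requiring most care is the restricted-menu reduction. One must ensure that the possibility $J(\mathbf{h})>j$ collapses to $\mathbf{h}[j]$ and that the vectors $\mathbf{h}[i]$ for $i<j$ coincide with $\mathbf{h^*}[i]$. The rest is the monotone chain of beliefs driven by strictness. Off-path beliefs never enter, because every comparison used is against a history already shown to be on path.
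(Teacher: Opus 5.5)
Your proof is correct and follows essentially the paper's route: the threshold structure plus the maximality of $j$ reduces the candidate disclosures at any $\mathbf{h}$ with $\mathbf{h}[j]=\mathbf{h^k}$ to $\mathbf{h^*}[0],\dots,\mathbf{h^*}[j-1]$ and $\mathbf{h^k}$, and strictness then eliminates the truncated options. The only difference is the comparator: the paper uses the fixed option $\mathbf{h^*}[j]$, which is feasible at every such $\mathbf{h}$ and satisfies $\pi(\mathbf{h^*}[j])>\pi(\mathbf{h^*}[i])$ for all $i<j$ by strictness at $\mathbf{h^*}$, so your auxiliary chain $\pi(\mathbf{h^k})>\pi(\mathbf{h^{k-1}})$ and the on-path bookkeeping are valid but not actually needed.
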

\begin{proof}[Proof of Lemma \ref{L3}:]
Since $j$ is the highest signal index that can be disclosed, the equilibrium is strict, and 
$\sigma_D$ is a threshold strategy, we know that $\sigma_D(\mathbf{h^*}[j]) = \mathbf{h^*}[j]$ and
\begin{equation}\label{D.1}
\pi(\mathbf{h^*}[j]) > \pi(\mathbf{h}) \textrm{ for every } \mathbf{h} < \mathbf{h^*}[j]. 
\end{equation}
Once the sender observes $\mathbf{h^1}$, we from (\ref{D.1}) that $\sigma_D (\mathbf{h^1})$ cannot be anything strictly less than $\mathbf{h^*}[j]$ since disclosing $\mathbf{h^*}[j]$ is available. Since $\sigma_D$ is a threshold strategy, all available signal $s_j$ must be disclosed as long as one $s_j$ is disclosed, which implies that $\sigma_D(\mathbf{h^1})=\mathbf{h^1}[j]$. Using the same logic, one can show that for every $k \in \mathbb{N}$, the hypothesis that 
$\sigma_D(\mathbf{h^k})=\mathbf{h^k}[j]$ implies that
$\sigma_D(\mathbf{h^{k+1}})=\mathbf{h^{k+1}}[j]$. This induction argument implies that
$\sigma_D(\mathbf{h^k})=\mathbf{h^k}[j]$ for every $k \geq 1$. The definition of $j$ implies that signals that do not belong to $\{s_{1},...,s_j\}$ will never be disclosed under $\sigma_D$, which implies that $\sigma_D(\mathbf{h})=\mathbf{h^k}[j]$ for every 
$\mathbf{h}$  that satisfies
$\mathbf{h}[j]= \mathbf{h^k}$.
\end{proof}
Lemma \ref{L3} suggests that if the receiver observes history $\mathbf{h^k}$, then she infers that the sender has disclosed all the available $\{s_1,...,s_j\}$ and has concealed all the other signals.
Recall our hypothesis that $j \geq 2$.
Applying the calculations in the proof of Theorem \ref{Theorem2} to $\pi(\mathbf{h^1})$ and $\pi(\mathbf{h^2})$, we know that 
there exists $\delta^* \in (0,1)$ such that 
$\pi(\mathbf{h^1}) > \pi(\mathbf{h^2})$ 
as long as $\delta > \delta^*$. This contradicts the sender's incentive to disclose $\mathbf{h^2}$ at $\mathbf{h^2}$ which requires that 
 $\pi(\mathbf{h^1}) \leq \pi(\mathbf{h^2})$.

\newpage
\bibliographystyle{ecta}
\bibliography{ref}

\end{document}